\documentclass[12pt]{article}
\pdfoutput=1
\usepackage{amsmath,amssymb, braket, amsthm, bm}
\usepackage[a4paper]{geometry}
\usepackage[colorlinks=true, pdfstartview=FitV, linkcolor=blue, citecolor=blue, urlcolor=blue]{hyperref}
\usepackage{graphicx}
\usepackage{subcaption}
\usepackage{xcolor}

\newcommand{\ind}{\mbox{ ind}}
\def\ran{\mathop{\rm Ran}\nolimits}

\newcommand {\bC}{{\mathbb C}}
\newcommand {\bZ}{{\mathbb Z}}
\newcommand {\bN}{{\mathbb N}}
\newcommand {\bM}{{\mathbb M}}
\newcommand {\bR}{{\mathbb R}}
\newcommand {\bS}{{\mathbb S}}
\newcommand {\bH}{{\mathbb H}}
\newcommand {\bL}{{\mathbb L}}
\newcommand {\bI}{{\mathbb I}}

\newcommand {\cE}{{\mathcal E}}

\newcommand {\cK}{{\mathcal K}}
\newcommand {\cR}{{\mathcal R}}

\newcommand{\cO}{{\mathcal O}}
\newcommand {\cV}{{\mathcal V}}

\newcommand{\cshift}{{\bf T}}
\newcommand {\flux}{{\Phi}}
\newcommand {\bbM}{{\bf M}}
\newcommand {\bbC}{{\bf C}}
\newcommand {\bbP}{{\bf P}}


\newcommand{\be}{\begin{equation}}
\newcommand{\ee}{\end{equation}}

\newtheorem{thm}{Theorem} [section]
\newtheorem{lem}[thm]{Lemma}
\newtheorem{prop}[thm]{Proposition}
\newtheorem{proposition}[thm]{Proposition}

\newtheorem{definition}[thm]{Definition}

\newtheorem{cor}[thm]{Corollary}

\newtheorem {rem}[thm]{Remark}

\newtheorem {remark}[thm]{Remark}

\title{ Examples for stable quantum currents
\thanks{ Supported by
ECOS-Conicyt C15E10}}
\author{Joachim Asch, Mohamed Mouneime \thanks{CNRS, CPT, Aix Marseille Universit\'e, Universit\'e de Toulon, Marseille, France, asch@cpt.univ-mrs.fr}}
\date{12/8/19}
\begin{document}
\maketitle

\begin{abstract}

We provide a mathematical analysis of  two models advocated  in the theoretical and experimental condensed matter  literature: the two dimensional spin-$1/2$ Quantum Walk and the Kagome quantum network; they apply to occurrence of stable quantum currents. This illustrates the theory of stable absolutely continuous spectrum and stable  currents developed in \cite{ABJ5}.
 \end{abstract}

\section{Introduction}

Stable quantum currents of topological origin are of great interest from the points of view of mathematical physics as well as for engineering of topological materials.
Here we are in particular interested in the  two dimensional spin-$1/2$ Quantum Walk proposed in \cite{kitagawa} in order to model topological phases in two dimensions and to be technically feasible; it  was recently experimentally realized in a photonic setup \cite{chenEtAl}. On the other hand oriented quantum network models, originally proposed to quantitatively understand the Quantum Hall localization-delocalization transition \cite{cc,kok} continue to be a formidable mathematical challenge \cite{cardy}. In the present article we concentrate on the delocalization aspect of the Kagome network which leads to the study of the Ruby graph. This was used to describe arrays of coupled optical resonators \cite{pasekchong, dft}.

Mathematically speaking both models are a discrete quantum dynamically system characterized by a unitary operator $U$ on $\ell^2(\cV;\bC^d)$, $\cV$ being the vertices of a graph,  which is parametrized by a countable family of $U(2)$ matrices.  Depending on this family these models may have any dynamical behavior.  Known results  which concern delocalization cover the existence of a current and full absolutely continuous spectrum for the  L-lattice network of Chalker Coddington and   coined  Quantum Walks \cite{ABJ5}.  Further mathematical results assert  that the presence of boundaries and symmetries of the bulk imply occurrence of currents, \cite{rlbl,dft,graftauber,ssb}.

We apply the method of \cite{ABJ5}
concerning the topological properties of the self-adjoint flux operator out of the subspace $\ran(P)$ of an orthogonal projection
$$
\flux=U^*PU-P.
$$ 
In particular the existence of a wandering subspace is guaranteed by a non vanishing index of $\Phi$; the subspace  reduces a perturbation of the  evolution operator to  a shift. This may be considered as a current; its existence implies  gapless absolutely continuous spectrum at all quasienergies under decay assumptions. 
 
 In the next section we discuss the Kagome quantum network, exhibit how it leads to the Ruby graph and show under very general conditions that a non trivial flux  through a curve exists, see Theorem \ref{thm:main} and Remark \ref{rem:cap}.
 
 Then we present our analysis of the two dimensional spin-$1/2$ Quantum Walk and prove the occurrence of an edge state in Theorem \ref{thm:currentKwalk}, again under very mild conditions,  thus providing a mathematical theory for the experimental findings of \cite{chenEtAl}. 

To be self contained we present known results on the flux operator in an Appendix.

\section{Transport on  the Ruby graph}\label{section:cc}

We present  transport results for the Chalker-Coddington model defined on the full  two-dimensional Ruby graph which describes the Kagome quantum network. We employ the methods developed in \cite{ABJ5}, in particular the study of a  flux operator for a projection on a suitably chosen half space.

The original Chalker-Coddington model on the L-network  was designed for a numerical study of  quantum transition  events between microscopic currents,  for background see  \cite{cc,kok, cardyoverview,ABJ1}. 
In terms of the physical model  the currents are encoded by the vertices of the Ruby graph $\cR$ while  their microscopic transition events   are encoded by a collection of unitary $2 \times 2$ matrices 
\[\left\lbrace S_{z}\right\rbrace_{z\in\cK}\in U(2)\] 
indexed by the vertices of the   Kagome graph $\cK$.

The collection $\left\lbrace S_{z}\right\rbrace_{z\in\cK}$, which is a priori arbitrary, represent the parameters of the model. It  defines the unitary operator:
\[U:\ell^2(\cR;\bC)\to\ell^2(\cR;\bC)\]
and we shall study the flux operator $\Phi=U^\ast P U-P$ for $P$ a projection on a half space.

We now present the details of this model. 
Let $\lbrace v_1,v_2\rbrace$ be a positively oriented basis of $\bR^2$ and denote 
\[\cK_1:=\{z\in\bR^2; z= n_1 v_1+n_2 v_2,\quad n_1,n_1\in\bZ\}; \quad \cK_2:=\cK_1+\frac{1}{2}v_1; \quad \cK_3:=\cK_3+\frac{1}{2}v_2.\]
The set of the  Kagome vertices is \[\cK:=\cK_1\bigcup \cK_2\bigcup\cK_3.\]

\begin{figure}[hbt]
\centerline {
\includegraphics[width=8cm]{./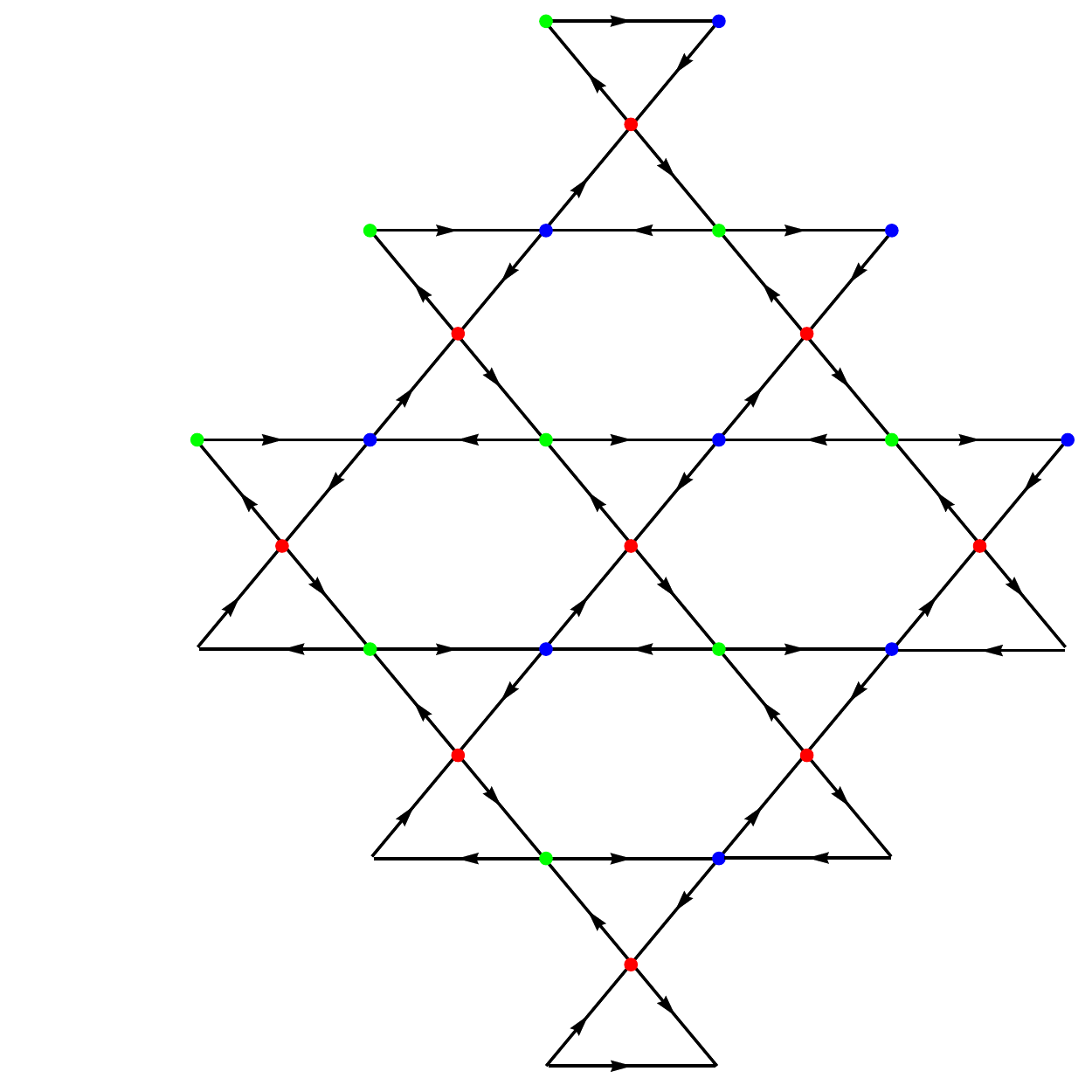}
}
\caption{The Kagome scattering network represented as graph with vertices $\cK$;  its edges connect  nearest neighbors, their orientation is anticlockwise around the hexagons and clockwise around the triangles.}
\label{fig:kagomegraph}
\end{figure}

The edges connect each vertex to its next neighbors. The faces adjacent to a vertex are in turn: a triangle, a hexagon, a triangle, a hexagon; the orientation of the edges is such that the orientation of the boundaries of the hexagons is anticlockwise, and clockwise for the triangles, see figure (\ref{fig:kagomegraph}) for the choice $v_1=e\left(\frac{\pi}{3}\right),v_2=e\left(\frac{2\pi}{3}\right)$. 

The physical model describes microscopic currents flowing along these edges. The two ingoing currents per vertex are scattered to the two outgoing currents which is encoded by the $2\times2$ unitary at the vertex. 
We label an edge (as segment of $\bR^2$) by its center. The set of edge-centers are the vertices $\cR$ of  the Ruby graph, {see figure (\ref{fig:rubygraph})}. For background on Archimedean Lattices see \cite{rsh}.  Explicitly:

\[ \cR=\{z\pm\frac{1}{4}v_j; j\in{1,2}, z\in\cK_1\}\bigcup\{z+\frac{1}{2}v_1\pm\frac{1}{4}(v_1-v_2); z\in\cK_1\}.\]

Then for $w\in\cR$ and  with the canonical base vectors \[\Ket{w}\in\ell^2(\cR;\bC) \hbox{ defined by } \ket{w}:= \left(\cR\ni x\mapsto\delta_{w,x}\in\bC\right),\] the edge with center $w$ is associated to the subspace $span\{\Ket{w}\}$; for a vertex $z$ of the Kagome graph define 
\[Q_z : \ell^2(\cR;\bC)\to \ell^2(\cR;\bC) \quad\hbox{ the projection on the  subspace  incoming to $z\in\cK$; }\]
as an example for $z\in\cK_1$: $Q_z=\Ket{z+\frac{1}{4}v_1}\Bra{z+\frac{1}{4}v_1}+\Ket{z-\frac{1}{4}v_1}\Bra{z-\frac{1}{4}v_1}$.
As  exactly two edges are incident to exactly one vertex we have
\[\ell^2(\cR;\bC)\equiv\bigoplus_{z\in\cK}\ran Q_z\]
and define the unitary $U$ uniquely by the collection of the $2\times2$-unitaries:
\[U:\ell^2(\cR;\bC)\to\ell^2(\cR;\bC); \qquad U\restriction\ran Q_z:=S_z\restriction\ran Q_z.\]
The subspace associated to the unique two outgoing edges at $z\in\cK$ is $\ran \widehat{Q}_z$ with the projection
\[\widehat{Q}_z:=U Q_zU^\ast,\]
see figure (\ref{fig:scatteringsubspaces}).

\begin{figure}[hbt]
\centerline {
\includegraphics[width=6cm]{./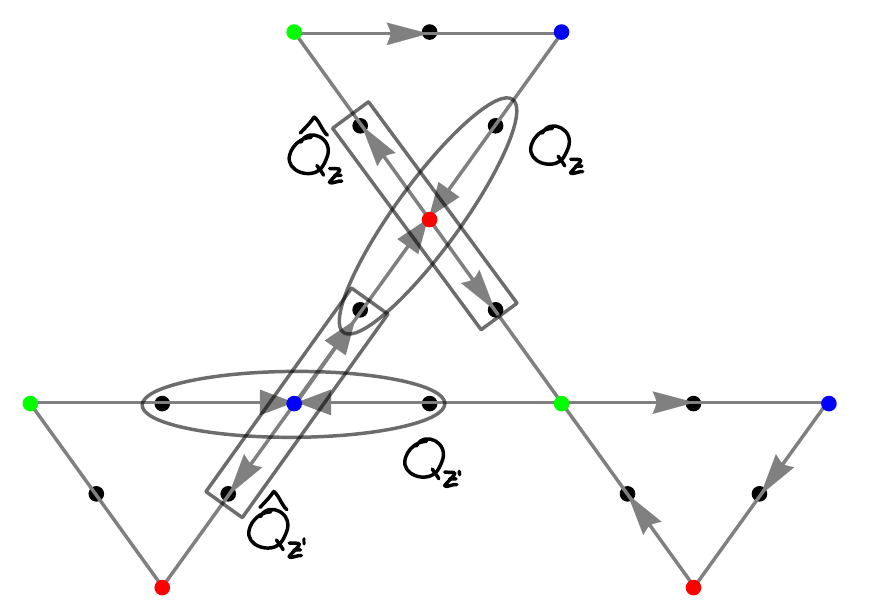}
}
\caption{The incoming and outgoing projections $Q_z$ and $\widehat Q_z$ }
\label{fig:scatteringsubspaces}
\end{figure}
\begin{rem} Note that $Q_z,\widehat{Q}_z$ do not depend on the collection of scattering matrices.
\end{rem}

The basic tool for the discussion of the flux operator is the following proposition which we adapt form \cite{ABJ5}, proposition 4.2, including its short intuitive proof.

\begin{prop}\label{prop:indexformula} Let $M$ be a subset of $\cR$ and $P$ the multiplication operator $P=\chi(x\in M)$, 
\[\Phi=U^\ast P U -P.\]
Then it holds

\begin{enumerate}
\item $\left\lbrack\Phi,Q_z\right\rbrack=0 \quad \forall z\in\cK$;
\item $\Phi Q_z=U^\ast (P\widehat{Q}_z)U- PQ_z$;
\item $\ind(\Phi)$ is well defined iff for a $c,R>0$ : $\sup_{\vert z\vert> R}\Vert \Phi Q_z\Vert\le c<1$;  then it holds:
\begin{equation}\label{eq:indexformula}
\ind(\Phi)=\sum_{z, \vert z\vert \le R} \dim\ran(P\widehat{Q}_z)-\dim\ran(PQ_z).
\end{equation}
\end{enumerate}
\end{prop}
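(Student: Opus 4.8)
The plan is to exploit that $P$, $Q_z$ and $\widehat Q_z$ are all diagonal in the position basis $\{\ket w\}_{w\in\cR}$: by hypothesis $P$ is a coordinate projection, while $Q_z$ and $\widehat Q_z$ project onto the spans of two basis vectors (the in- resp.\ outgoing edge-centers at $z$). Hence $[P,Q_z]=[P,\widehat Q_z]=0$. For item (1) I would write $\Phi=U^\ast P U-P$ and use $U Q_z=\widehat Q_z U$ (equivalently $Q_z=U^\ast\widehat Q_z U$) to get $[U^\ast P U,Q_z]=U^\ast[P,\widehat Q_z]U=0$, which together with $[P,Q_z]=0$ gives $[\Phi,Q_z]=0$. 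For item (2) the same identity yields $U^\ast P U Q_z=U^\ast P\widehat Q_z U$, whence $\Phi Q_z=U^\ast(P\widehat Q_z)U-PQ_z$. Both are short and purely algebraic.

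For item (3) the key structural observation is that, since $\sum_{z\in\cK}Q_z=I$ and $\Phi$ commutes with every $Q_z$ by (1), $\Phi$ is block diagonal: $\Phi=\bigoplus_{z\in\cK}\Phi_z$ with $\Phi_z:=\Phi\restriction\ran Q_z$ a self-adjoint operator on the (at most two-dimensional) space $\ran Q_z$, and $\|\Phi_z\|=\|\Phi Q_z\|$. The eigenspaces $\ker(\Phi\mp1)$ split accordingly, so with the convention $\ind(\Phi):=\dim\ker(\Phi-1)-\dim\ker(\Phi+1)$ one has $\ind(\Phi)=\sum_z\bigl(\dim\ker(\Phi_z-1)-\dim\ker(\Phi_z+1)\bigr)$. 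The hypothesis $\sup_{|z|>R}\|\Phi Q_z\|\le c<1$ says exactly that for $|z|>R$ the block $\Phi_z$ has no eigenvalue at $\pm1$, so only the finitely many blocks with $|z|\le R$ contribute; moreover $\pm1$ are then isolated from the rest of the spectrum (contained in $[-c,c]$ away from those blocks), so the index is finite and stable, i.e.\ well defined. The converse follows by contraposition: if the condition fails there is a sequence $|z_n|\to\infty$ with $\|\Phi_{z_n}\|\to1$, giving an orthonormal Weyl sequence accumulating at $+1$ or $-1$ and hence placing it in $\sigma_{\mathrm{ess}}(\Phi)$, which destroys well-definedness.

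It remains to evaluate each finite block, and this is where the real content sits. By (2), on $\ran Q_z$ we have $\Phi_z=A_z-B_z$ with $B_z:=PQ_z$ and $A_z:=U^\ast(P\widehat Q_z)U$; using that $P$ commutes with $Q_z$ and $\widehat Q_z$ and that $U$ is unitary with $U^\ast\widehat Q_z U=Q_z$, both $A_z$ and $B_z$ are orthogonal projections of $\ran Q_z$, of ranks $\dim\ran(P\widehat Q_z)$ and $\dim\ran(PQ_z)$ respectively. I would then invoke the finite-dimensional spectral theory of a difference of two projections (Halmos' two-subspaces lemma): the eigenvalues of $A_z-B_z$ in $(-1,0)\cup(0,1)$ occur in $\pm\lambda$ pairs of equal multiplicity, while $\ker(\Phi_z-1)=\ran A_z\cap\ker B_z$ and $\ker(\Phi_z+1)=\ker A_z\cap\ran B_z$; therefore
\[
\dim\ker(\Phi_z-1)-\dim\ker(\Phi_z+1)=\tr\Phi_z=\tr A_z-\tr B_z=\dim\ran(P\widehat Q_z)-\dim\ran(PQ_z).
\]
Summing over $|z|\le R$ gives \eqref{eq:indexformula}.

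The main obstacle is precisely this block-by-block identity: one must ensure that the ``bulk'' eigenvalues of $\Phi_z$ away from $\pm1$ cancel in the signed count, which is exactly what the $\pm\lambda$ pairing for a difference of projections provides. The clean way to phrase it is $\dim\ker(\Phi_z-1)-\dim\ker(\Phi_z+1)=\tr\Phi_z$ for the finite matrix $\Phi_z$, reducing the topological index to a trace of a difference of projections; once $A_z$ and $B_z$ are recognized as projections, the identification of their ranks with $\dim\ran(P\widehat Q_z)$ and $\dim\ran(PQ_z)$ is immediate.
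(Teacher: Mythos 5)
Your proposal is correct and follows essentially the same route as the paper: commute $\Phi$ with the block projections $Q_z$, reduce the index to the finitely many blocks with $|z|\le R$, and identify each block's contribution with $\mathrm{trace}(\Phi Q_z)=\dim\ran(P\widehat{Q}_z)-\dim\ran(PQ_z)$. The only difference is one of detail: the paper simply asserts $\ind(\Phi Q_z)=\mathrm{trace}(\Phi Q_z)$, whereas you justify it explicitly via the $\pm\lambda$ eigenvalue pairing for a difference of two finite-rank projections, which is a welcome elaboration of the same step.
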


\begin{proof} By definition $Q_z,\widehat{Q}_z$ are multiplication operators on $\ell^2(\cR;\bC)$ thus commute with $P$, or
\[\lbrack\Phi,Q_z\rbrack=\lbrack U^\ast P U,Q_z\rbrack=U^\ast\lbrack P, \widehat{Q}_z\rbrack U=0.\]
Also $\ker\left(\Phi^2-\bI\right)=\bigoplus_{z\in\cK}\ker\left((\Phi^2-\bI)Q_z\right)$ and $\sigma\left(\Phi^2\right)=\bigcup_{z\in\cK}\sigma\left(\Phi^2Q_z\right)$. By  assumption $dist\left(\{1\},\bigcup_{\vert z\vert>R}\sigma(\Phi^2Q_z)\right)>0$ thus $1$ is an isolated finite dimensional  eigenvalue and 
\[\ind(\Phi)=\sum_{\vert z\vert\le R} \dim\ker\left((\Phi-\bI)Q_z\right)-\sum_{\vert z\vert\le R}\dim\ker\left((\Phi+\bI)Q_z\right)=\]\
\[=\sum_{\vert z\vert\le R}\ind(\Phi Q_z)=\sum_{\vert z\vert\le R}trace(\Phi Q_z)=\sum_{z, \vert z\vert \le R} \dim\ran(P\widehat{Q}_z)-\dim\ran(PQ_z).\]
\end{proof}
In order to efficiently define the projection and to count the relevant dimensions we associate 
 to the family of unitaries $U$  the directed Ruby graph  $G=(\cR,\cE)$ whose set of vertices is $\cR$ and whose directed edges $\cE$
 are defined by, c.f. figure (\ref{fig:rubygraph}),
\[\overrightarrow{xy}\in \cE \hbox{ \it iff } \langle y,U_c\ x\rangle\neq0\]
where $U_c$ is the  model which has all its scattering matrices $S_z=\frac{1}{\sqrt{2}}\left(
\begin{array}{cc}
  1   &-1   \\
  1   &  1  
\end{array}
\right)$.

For a given $U$ we attribute
\[\hbox{ the weight }\vert\langle y,U x\rangle\vert \hbox{ to the edge } \overrightarrow{xy}\in \cE.\]
For a given $z\in\cK$ there are four edges, we label the weight by $\vert r_z\vert$ (color: black)  if the corresponding current  is scattered to its left and by $\vert t_z\vert$ if it is scattered to its right (color: red). Explicitly we have for $z\in\cK_1$
\[\left\vert\left\langle z-\frac{1}{4}v_2,U (z+\frac{1}{4}v_1)\right\rangle\right\vert=:\left\vert t_z\right\vert, \qquad \left\vert\left\langle z+\frac{1}{4}v_2,U (z+\frac{1}{4}v_1)\right\rangle\right\vert=:\vert r_z\vert\]
and all other cases analogously, see figure (\ref{fig:weights}). 

Remark that for a choice of a basis in each $\ran Q_z, \ran \widehat Q_z$ which is compatible with this convention, the parameters are
\[S_z\equiv q_z\left(
\begin{array}{cc}
  r_z   &-t_z   \\
  \bar{t_z}   &  \bar{r_z}  
\end{array}
\right)\quad \hbox{ with } q_z\in S^1, r_z,t_z\in\bC \hbox{ s.t. } \vert r_z\vert^2+\vert t_z\vert^2=1.\]

\begin{figure}[hbt]
\centerline {
\includegraphics[width=6cm]{./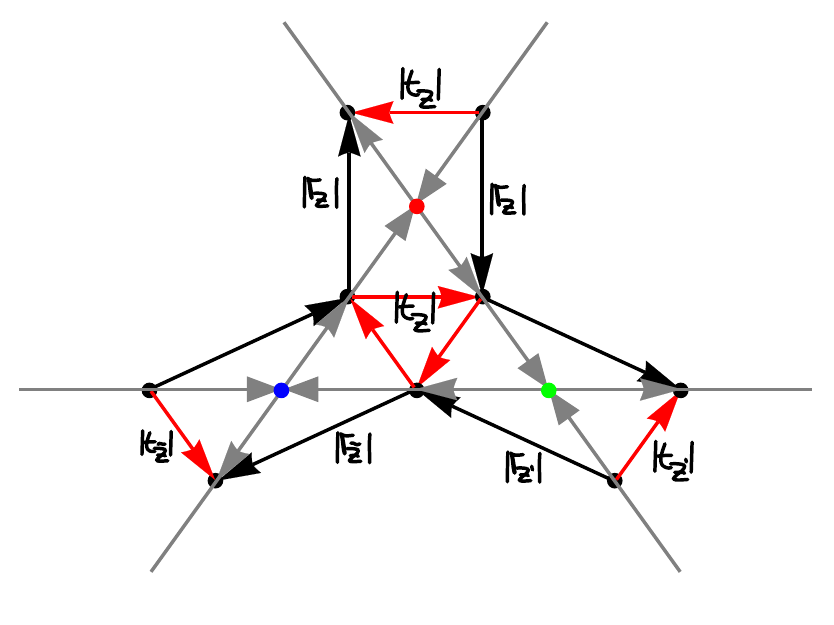}
}
\caption{The weights of $(\cR,\cE)$.}
\label{fig:weights}
\end{figure}

The projection $P$ will be defined by a path in the dual graph.

The faces of the Ruby graph figure (\ref{fig:rubygraph}) which are  adjacent to a fixed vertex are in turn  a triangle, a rectangle, a hexagon, a rectangle. The dual graph 
\[G^\ast=(\cR^\ast,\cE^\ast)\]
has its vertices at the center of the faces of $(\cR,\cE)$
and the dual edges (which we call links to distinguish) are segments which connect to all next neighbors  thus bisecting  edges of $\cE$. Remark that the scattering points are the centers of the rectangles in the Ruby graph:  $\cK\subset\cR^\ast$.

\begin{figure}[hbt]
\centerline {
\includegraphics[width=9cm]{./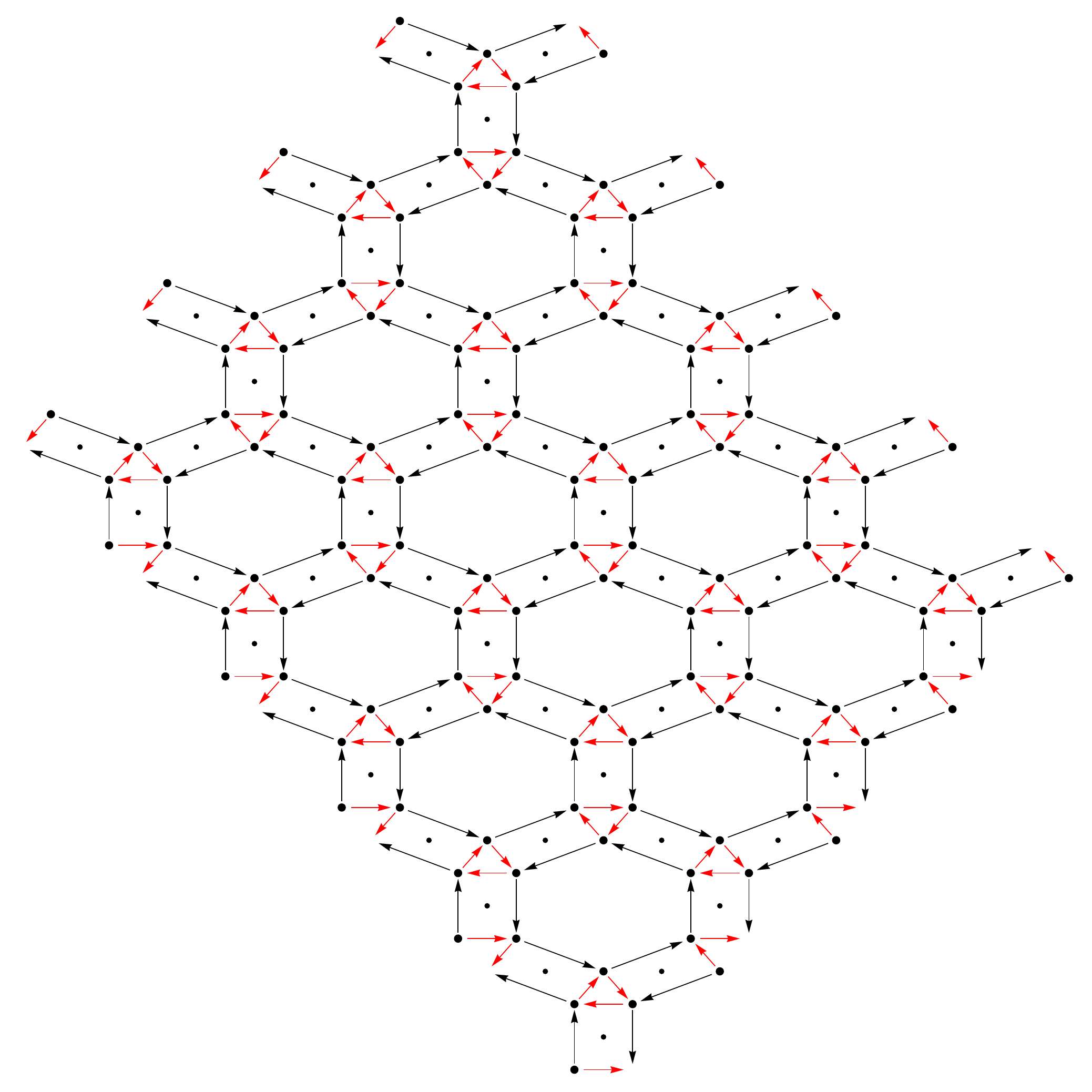}
}
\caption{The Ruby graph}
\label{fig:rubygraph}
\end{figure}

\begin{figure}[hbt]
\centerline {
\includegraphics[width=9cm]{./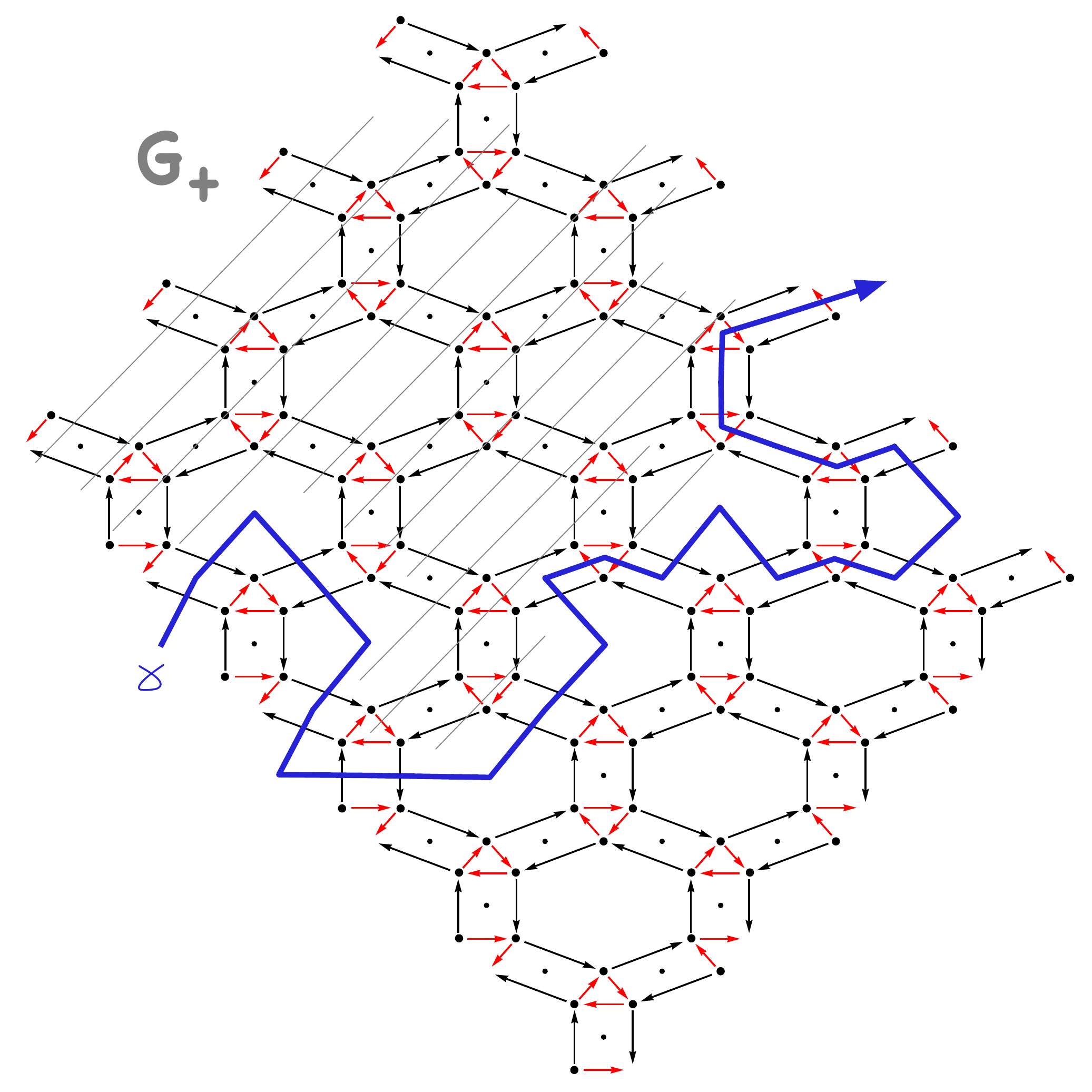}
}
\caption{An admissible path defining the projection on the states on its left}
\label{fig:projection}
\end{figure}

Now consider a partition of $G$ in two infinite connected subgraphs $G_+$ and $G_-$, consider a path in $G^\ast$ which bisects the edge boundary of $G_+$
and visualize this path as an injective, continuous, and piecewise unit speed curve of straight segments on the dual graph in $\bR^2$
\[\gamma: \bR\to G^\ast\subset\bR^2 \hbox{ such that for integer } t\in\bZ: \gamma(t)\in \cR^\ast \hbox{ and } \overrightarrow{\gamma(t)\gamma(t+1)}\in \cE^\ast\]
oriented such that $G_+$ is to the left of $\gamma$; we call it an admissible\ path, see figure (\ref{fig:projection}).
Let $\cR_+$ be the set of vertices of $G_+$  and consider its projection and flux operator in $\ell^2(\cR; \bC)$ :
\[P_\gamma=\chi(x\in \cR_+), \quad \Phi_\gamma=U^\ast P_\gamma U-P_\gamma.\]

In the following we shall give sufficient conditions for non trivial $\ind(\Phi_\gamma)$. By the general theory of Theorem \ref{thm:general} and Proposition \ref{lem:wandering} we know that all vertices which are not incident to the edge boundary of $G_+$ belong to  $\ker(\Phi_\gamma)$;  in view of the basic formula (\ref{eq:indexformula}) we  label the set of interesting vertices as follows:

\begin{definition}Let $\gamma:\bR \to G^\ast$ be an admissible path. We call $\cE_\gamma\subset \cE$
the set of edges bisected by $\gamma$ and $\cK_\gamma$ the labels of subspaces $\ran Q_z$ which contain  vertices incident to $\cE_\gamma$, i.e. :
\[\cK_\gamma:=\left\{z\in\cK; P_\gamma^\perp U_c P_\gamma Q_z\neq0 \hbox{ or } P_\gamma U_c P_\gamma^\perp Q_z\neq0\right\}.\]
\end{definition}

From this graphical representation of $U$ and $P$ we can literally read off how to construct a useful admissible path. The basis observation is:
\begin{rem}\label{rem:dimensions} From equation (\ref{eq:indexformula}) and the fact that $Q_z$ and $\widehat{Q}_z$ are diagonally opposite  vertices of the rectangles of the Ruby graph, we observe:  consecutively bisected  links of type hexagon-rectangle-hexagons as well as  triangle-rectangle-triangle do not contribute to the index because at the common dual vertex $z\in\cK$ it holds $\dim\ran PQ_z=\dim\ran P\widehat Q_z$. On the other hand consecutively bisected  links of type hexagon-rectangle-triangle as well as  triangle-rectangle-hexagon do  contribute to the index because $\dim\ran PQ_z\neq\dim\ran P\widehat Q_z$, see figure (\ref{fig:randtpath}).
\end{rem}
We define the non contributing parts :

\begin{definition}Let $I\subset\bR$. An admissible path $\gamma$ is called  a $r$-path in $I\subset\bR$ if its restriction to $I$ bisects only edges of weight $\vert r\vert$ and a $t$-path in $I\subset\bR$ if its restriction to $I$ bisects only edges of weight $\vert t\vert$.
\end{definition}

Concerning the regularity of $\Phi_\gamma$ we have
\begin{figure}
    \centering
    \begin{subfigure}[b]{0.20\textwidth}
        \includegraphics[width=\textwidth]{./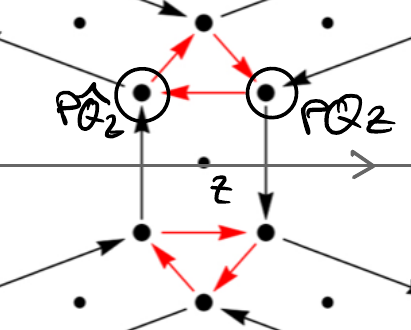}
 \caption{r and r edge}
       \label{fig:rredge}
    \end{subfigure}\qquad\hspace{1cm} 
    \begin{subfigure}[b]{0.25\textwidth}
        \includegraphics[width=\textwidth]{./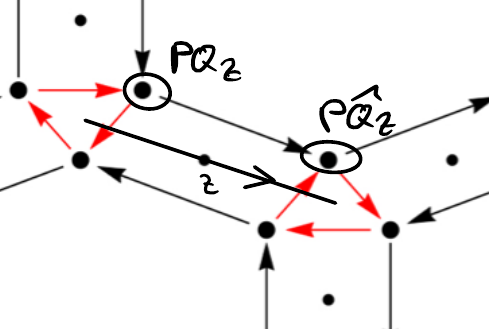}
  \caption{t and t edge}
       \label{fig:ttedge}
        \end{subfigure}\qquad
         \begin{subfigure}[b]{0.20\textwidth}
        \includegraphics[width=\textwidth]{./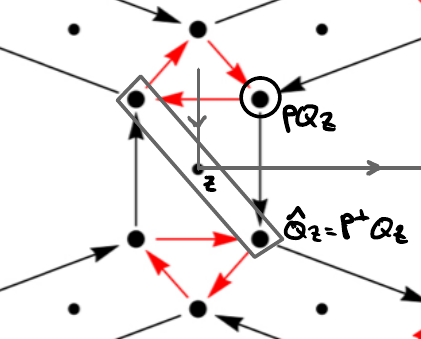}
        \caption{outgoing r and t edge}
       \label{fig:rtedge}
        \end{subfigure}
    \caption{Parts contributing and non contributing to $\ind(\Phi)$}
     \label{fig:randtpath}
\end{figure}
\begin{lem}\label{lem:estimate}Let $\gamma$ be an admissible  $r$-path in $\bR$; then it holds for the operator and the trace norm: 
\[\Vert\Phi_\gamma\Vert\le\sup_{z\in \cK_\gamma}\vert r_z\vert,\quad \Vert\Phi_\gamma\Vert_1\le const. \sum_{z\in \cK_\gamma}\vert r_z\vert.\]

\noindent The analogous statement holds true for a $t$-path
\end{lem}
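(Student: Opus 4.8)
The plan is to turn Proposition~\ref{prop:indexformula}(1) into a block decomposition and then to estimate each block, which is at most a $2\times2$ matrix, by hand. Since the $\{Q_z\}_{z\in\cK}$ are mutually orthogonal projections with $\sum_{z}Q_z=\bI$ and each commutes with $\Phi_\gamma$, the operator is block diagonal, $\Phi_\gamma=\bigoplus_{z\in\cK}\Phi_\gamma Q_z$, the $z$-block acting on the two-dimensional space $\ran Q_z$. Consequently $\Vert\Phi_\gamma\Vert=\sup_{z}\Vert\Phi_\gamma Q_z\Vert$ and $\Vert\Phi_\gamma\Vert_1=\sum_{z}\Vert\Phi_\gamma Q_z\Vert_1$, and since each block has rank at most $2$ we have $\Vert\Phi_\gamma Q_z\Vert_1\le 2\Vert\Phi_\gamma Q_z\Vert$. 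For $z\notin\cK_\gamma$ no edge at $z$ is bisected, so all four corners of the rectangle at $z$ lie on the same side of $\gamma$; then $P_\gamma$ restricts to a single scalar $c\in\{0,1\}$ on $\ran Q_z\oplus\ran\widehat Q_z$ and Proposition~\ref{prop:indexformula}(2) gives $\Phi_\gamma Q_z=S_z^\ast(c\,\bI)S_z-c\,\bI=0$. Thus only the blocks with $z\in\cK_\gamma$ survive, and it remains to bound each of them by $\vert r_z\vert$.

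Fix $z\in\cK_\gamma$ and write $\Phi_\gamma Q_z=S_z^\ast \widehat D_z S_z-D_z$ in the incoming basis $\{e_1,e_2\}$ of $\ran Q_z$, where $D_z=\mathrm{diag}(p_1,p_2)$ and $\widehat D_z=\mathrm{diag}(\hat p_1,\hat p_2)$ record by $0/1$ whether the two incoming, resp.\ the two outgoing, corners lie in $\cR_+$. The decisive geometric input is that an $r$-path bisects no $t$-edge: since the $t$-edges join $e_1$ to $f_2$ and $e_2$ to $f_1$, the condition that neither is bisected forces $\hat p_1=p_2$ and $\hat p_2=p_1$, that is $\widehat D_z=X D_z X$ with $X=\left(\begin{smallmatrix}0&1\\1&0\end{smallmatrix}\right)$. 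In particular the two $r$-edges at $z$ are then simultaneously bisected iff $p_1\neq p_2$, which recovers for an $r$-path the description of $\cK_\gamma$ as $\{z:p_1\neq p_2\}$.

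Inserting $S_z=q_z\left(\begin{smallmatrix}r_z&-t_z\\ \bar t_z&\bar r_z\end{smallmatrix}\right)$, the phase $q_z$ drops out, and using $\vert r_z\vert^2+\vert t_z\vert^2=1$ a direct $2\times2$ computation gives
\[
\Phi_\gamma Q_z=(p_1-p_2)\begin{pmatrix}-\vert r_z\vert^2 & \bar r_z t_z\\ r_z\bar t_z & \vert r_z\vert^2\end{pmatrix},\qquad p_1-p_2\in\{-1,0,1\}.
\]
This matrix is Hermitian with trace $0$ and determinant $-\vert r_z\vert^2$, hence its eigenvalues are $\pm\vert r_z\vert$ (scaled by $p_1-p_2$), so $\Vert\Phi_\gamma Q_z\Vert\le\vert r_z\vert$ and $\Vert\Phi_\gamma Q_z\Vert_1\le 2\vert r_z\vert$. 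Taking the supremum, resp.\ the sum, over $z\in\cK_\gamma$ yields both bounds, with constant $2$; the $t$-path statement follows verbatim after exchanging the roles of the $r$- and $t$-edges. The only nonroutine point is the geometric translation of the second paragraph, namely that ``bisects only $r$-edges'' is exactly the swap relation $\widehat D_z=X D_z X$; everything else is the $2\times2$ linear algebra just displayed.
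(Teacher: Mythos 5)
Your proof is correct and follows essentially the same route as the paper: both decompose $\Phi_\gamma$ into the $2\times2$ blocks $\Phi_\gamma Q_z$ via $[\Phi_\gamma,Q_z]=0$, discard the blocks with $z\notin\cK_\gamma$, and show each surviving block has modulus $\vert r_z\vert$ (the paper states $\Phi_\gamma^2Q_z=\vert r_z\vert^2Q_z$ by reading it off a figure, so that $\vert\Phi_\gamma\vert=\sum_{z\in\cK_\gamma}\vert r_z\vert Q_z$). Your explicit computation of $S_z^\ast\widehat D_zS_z-D_z$ with the swap relation $\widehat D_z=XD_zX$ is just a self-contained verification of that same key identity, and yields the same constant $2$ in the trace-norm bound.
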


\begin{proof}  Only states with label in $\cK_\gamma$ contribute,    by Theorem \ref{lem:wandering}.\ref{item:phi2}: $\lbrack \Phi_\gamma^2,Q_z\rbrack=0$  
and $\gamma$ crosses  only  edges of weight $\vert r_z\vert$  thus $\Phi_\gamma^2 Q_z=\vert r_z\vert^2 Q_z$, see figure (\ref{fig:phi2}), so we have
\[\Phi_\gamma^2=PU^\ast P_\gamma^\perp U P_\gamma+P_\gamma^\perp U^\ast P_\gamma U P_\gamma^\perp=\sum_{z\in\cK} \Phi_\gamma^2 Q_z=\sum_{z\in \cK_\gamma} \Phi_\gamma^2 Q_z=\sum_{z\in \cK_\gamma} \vert r_z\vert^2 Q_z.\]. It follows that  $\vert \Phi_\gamma\vert=\sum_{z\in \cK_\gamma}\vert r_z\vert Q_z$ which implies the assertion.

\begin{figure}[hbt]
\centerline {
\includegraphics[width=6cm]{./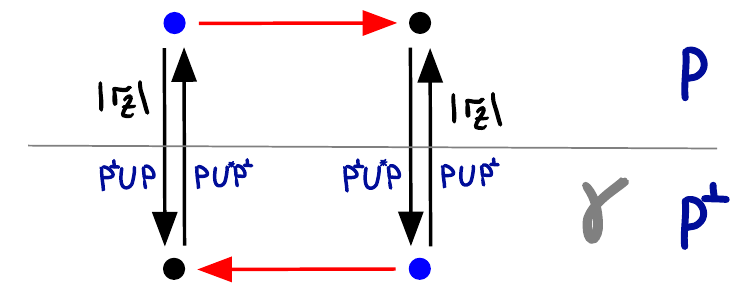}
}
\caption{The action of $\Phi^2Q_z$ for an $r$-path}
\label{fig:phi2}
\end{figure}
\end{proof}

We are now able to prove that a Chalker Coddington model on the Ruby graph which admits a suitable admissible path  defines a projection with non-trivial flux. We show that  a path is suitable if it  crosses only hexagons and rectangles in the past and only triangles and rectangles in the future, and such that the associated weights are sufficiently regular. Remark that an analogous result was proved in Theorem 4.7  in \cite{ABJ5} for the L-network which leads to the original Chalker Coddington model on the associated Manhattan graph. We can adapt this  result without essential difficulty to the present situation.

\begin{thm} \label{thm:main} Let $U$ be a Chalker Coddington model on the Ruby graph such that there exists  an admissible path $\gamma$  which is an $r$-path in $(-\infty, -N\rbrack$ and a  $t$-path in $\lbrack N, \infty)$ for an integer $N>1$, and such that  for a $c>0$ 
\[\vert r_z\vert\le c<1 \quad \forall z\in \cK_{\gamma\restriction (-\infty, -N\rbrack} \hbox{ and } \vert t_z\vert\le c<1 \quad \forall z\in \cK_{\gamma\restriction \lbrack N, \infty)}.\]
Then $\ind(\Phi_\gamma)$ is well defined and it holds
\[\vert\ind(\Phi_\gamma)\vert=1.\]
In addition
\begin{enumerate}
\item $\Phi_\gamma$ is compact iff \quad$\lim_{s\to-\infty}{r_{\gamma(s)}}=0=\lim_{s\to\infty}{t_{\gamma(s)}}$
and then  
\[\sigma(U)=S^1.\]
\item $\Phi_\gamma$ is trace class iff  \quad$\sum_{z\in \cK_{\gamma\restriction (-\infty, -N\rbrack} }\vert r_z\vert + \sum_{z\in \cK_{\gamma\restriction \lbrack N, \infty)}} \vert t_z\vert <\infty$
and then 
\[\sigma_{ac}(U)=S^1,\]
and a trace class perturbation of $U$ contains a shift operator.
\end{enumerate}

\end{thm}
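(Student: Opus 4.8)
The plan is to compute $\ind(\Phi_\gamma)$ via the index formula \eqref{eq:indexformula}, reducing the infinite sum over $\cK$ to the finitely many "corner" vertices where the path transitions from $r$-type to $t$-type behaviour. First I would verify the well-definedness hypothesis of Proposition \ref{prop:indexformula}.\ref{eq:indexformula}: by Lemma \ref{lem:estimate} and its $t$-path analogue, on the tails the path bisects only $r$-edges (resp.\ $t$-edges) of weight bounded by $c<1$, so $\Phi_\gamma^2 Q_z = \vert r_z\vert^2 Q_z$ (resp.\ $\vert t_z\vert^2 Q_z$) gives $\sup_{\vert z\vert>R}\Vert\Phi_\gamma Q_z\Vert\le c<1$ for $R$ large. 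Hence $\ind(\Phi_\gamma)$ is well defined and the index localizes to the bounded window $\vert z\vert\le R$, which by Remark \ref{rem:dimensions} contains only the transition region between $-N$ and $N$.

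Next I would evaluate the localized sum. By Remark \ref{rem:dimensions}, each purely-$r$ (hexagon-rectangle-hexagon) or purely-$t$ (triangle-rectangle-triangle) crossing contributes $\dim\ran P\widehat Q_z-\dim\ran PQ_z=0$, so these cancel throughout the tails. The only nonzero contributions come from the transition crossings of mixed type hexagon-rectangle-triangle or triangle-rectangle-hexagon. Since $\gamma$ starts as an $r$-path in $(-\infty,-N]$ and ends as a $t$-path in $[N,\infty)$, a parity/telescoping argument on the dimension differences shows the net contribution equals $\pm1$: the path effects exactly one net switch from the "incoming below" to the "outgoing below" configuration, yielding $\vert\ind(\Phi_\gamma)\vert=1$. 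This is the analogue of Theorem 4.7 in \cite{ABJ5}, and the combinatorics are governed entirely by the geometry in figure (\ref{fig:randtpath}).

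For the addendum, I would invoke the general theory. For part (1): $\Phi_\gamma$ is a sum of the mutually orthogonal pieces $\vert r_z\vert Q_z$ and $\vert t_z\vert Q_z$ over $\cK_\gamma$, so compactness is equivalent to the weights tending to $0$ along the two tails, i.e.\ $\lim_{s\to-\infty}r_{\gamma(s)}=0=\lim_{s\to\infty}t_{\gamma(s)}$; given a nonzero index, the gapless absolutely continuous spectrum statement $\sigma(U)=S^1$ then follows from Theorem \ref{thm:general}. For part (2): from $\vert\Phi_\gamma\vert=\sum_{z\in\cK_\gamma}\vert r_z\vert Q_z$ (with $\vert t_z\vert$ on the future tail) and $\dim\ran Q_z=2$, the trace norm is finite iff the displayed sum of weights converges, and the wandering-subspace construction of Proposition \ref{lem:wandering} together with the trace-class invariance of $\sigma_{ac}$ yields $\sigma_{ac}(U)=S^1$ and the embedded shift.

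The main obstacle I anticipate is the careful bookkeeping in the localized index sum: one must confirm that the dimension differences at the finitely many mixed crossings telescope to net $\pm1$ rather than cancelling or accumulating, which requires tracking precisely which diagonally-opposite vertex of each rectangle lies in $\cR_+$ as the path geometry changes across the transition window. The spectral conclusions are then essentially a direct citation of the stable-spectrum machinery of \cite{ABJ5} recalled in the Appendix, so the real work is confined to the geometric index computation.
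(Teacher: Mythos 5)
Your overall architecture is sound: the well-definedness argument via the tail estimate of Lemma \ref{lem:estimate}, the localization of the index sum to the transition window via Remark \ref{rem:dimensions}, and the derivation of the spectral addendum from Theorem \ref{thm:general} all match what is needed and what the paper does. The gap is exactly where you flag it, and it is not a detail you can defer: your ``parity/telescoping argument'' at the mixed crossings is asserted, not executed, and it is the entire content of the claim $\vert\ind(\Phi_\gamma)\vert=1$. To carry it out you would have to show that every hexagon--rectangle--triangle crossing contributes the same sign $\epsilon=\pm1$ to $\dim\ran(P\widehat Q_z)-\dim\ran(PQ_z)$ and every triangle--rectangle--hexagon crossing contributes $-\epsilon$, uniformly along the path. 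This is delicate: at a mixed crossing the two bisected edges of the rectangle share a vertex, the path isolates that vertex from the other three, and the resulting contribution is $+1$ or $-1$ depending on whether the isolated vertex is incoming or outgoing and on which side of $\gamma$ it lies. Nothing in your sketch rules out two mixed crossings of the same combinatorial type contributing opposite signs, which would break the telescoping.

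The paper avoids this bookkeeping entirely with a homotopy trick you are missing: it first treats the case of a path with a \emph{single} $r$-to-$t$ transition, where the localized sum has exactly one term and $\vert\ind\vert=1$ is immediate from the geometry of figure (\ref{fig:randtpath}); then, for a general $\gamma$, it picks a hexagon center $\gamma(N_-)$ in the past tail and a triangle center $\gamma(N_+)$ in the future tail and replaces the whole middle portion of $\gamma$ by an $r$-path inside $G_+$ joining them, producing an admissible path $\widehat\gamma$ with a single transition. Since $P_\gamma-P_{\widehat\gamma}$ has finite rank, the invariance properties of the index (Proposition \ref{lem:wandering}) give $\ind(\Phi_\gamma)=\ind(\Phi_{\widehat\gamma})=\pm1$ without any sign analysis at intermediate crossings. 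Either you adopt this deformation argument, or you must supply the sign-consistency lemma your telescoping requires; as written, the central step of your proof is incomplete. The remainder (compactness and trace-class criteria from the diagonal form of $\vert\Phi_\gamma\vert$ modulo a finite-rank middle part, and the spectral conclusions from the Appendix) is correct and essentially identical to the paper.
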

\begin{proof}If $\gamma$ is a path which switches from $r$ to $t$ on one site, i.e : if $\gamma$ is an $r$-path in $(-\infty, 0)$ and a $t$-path in $[0,\infty)$, then as a corollary of Proposition \ref{prop:indexformula} and Remark \ref{rem:dimensions} we have
\[\ind(\Phi)=\sum_{z\in \cK_{\gamma\restriction_{[-N,N]}}}\dim\ran(P\widehat{Q}_z)-\dim\ran(PQ_z)=\dim\ran(P\widehat{Q}_{z_0})-\dim\ran(PQ_{z_0})\]
with  $z_0=\gamma(0)$. By the symmetry of the problem the two edges bisected by the links adjacent to $\gamma(0)$ are either incoming or outgoing to $G_+$ thus  $\dim\ran PQ_{z_0}=1$ and $\dim\ran P\widehat{Q}_{z_0}=0$ or the other way round, see figure (\ref{fig:rtedge}). Thus $\vert \ind(\Phi_\gamma)\vert=1$.

If $\gamma$ switches several times cutting $r$ and $t$ links then  choose two integers $N_\pm\in\bZ$ such that $ \pm N_\pm>\pm N$ and such that $\gamma(N_-)$ is the center of a hexagon and $\gamma(N_+)$ the center of a triangle. Thus all links incident to $\gamma(N_-)$  bisect edges of weight $\vert r\vert$ and all links incident to $\gamma(N_+)$  bisect edges of weight $\vert t\vert$. Then we can define a new admissible path $\widehat\gamma$  replacing the part $\gamma\restriction_{(-N_-,N_+)}$ by an $r$ path inside $G_+$ connecting $\gamma(N_-)$ to $\gamma(N_+)$, c.f. figure (\ref{fig:joinrt}). The new path switches from $r$ to $t$ only at $\gamma(N_+)$.The difference $P_\gamma-P_{\widehat\gamma}$ is of finite rank thus by the invariance properties of the index
 \[\ind(\Phi_\gamma)=\ind(\Phi_{\widehat\gamma}).\]
 The additional assertions follow from Lemma \ref{lem:estimate} and Theorem \ref{thm:general}.
 \begin{figure}
\centerline {
\includegraphics[width=8cm]{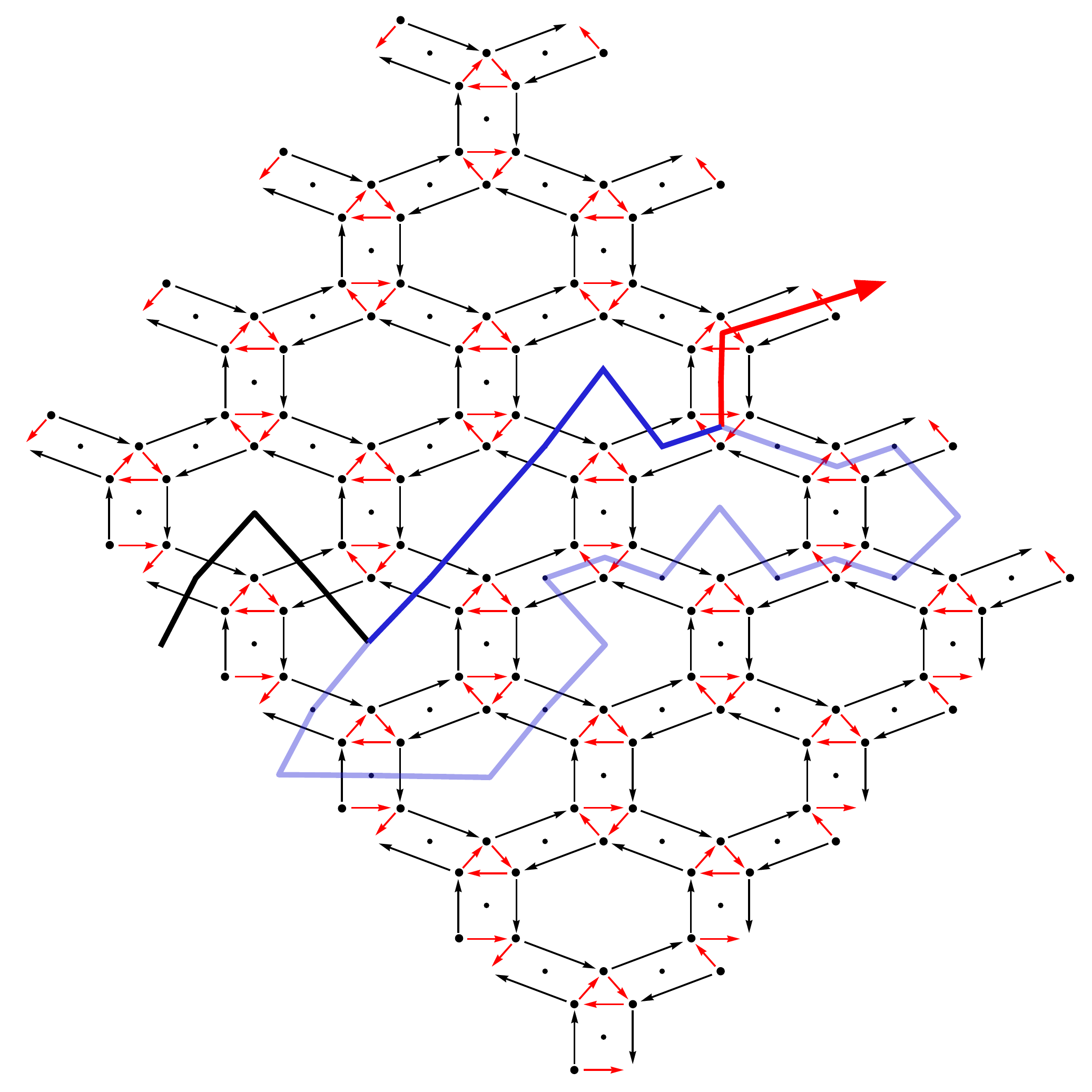}
}
\caption{Join r path to t path in figure (\ref{fig:projection})}
\label{fig:joinrt}
\end{figure}
\end{proof}

As by unitarity we have $\vert r_z\vert^2+\vert t_z\vert^2=1$ it follows:
\begin{cor} 
Any Chalker Coddington model on the Ruby graph defined by a collection of scattering matrices such that  for  $c_1,c_2\in(0,1)$ and all $z$:  $0<c_1<\vert r_z\vert<c_2<1$, admits a projection with non trivial index.
\end{cor}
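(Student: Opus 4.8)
The plan is to deduce this directly from Theorem~\ref{thm:main}: it suffices to exhibit one admissible path $\gamma$ meeting the hypotheses of that theorem, and the two-sided bound $0<c_1<|r_z|<c_2<1$ is arranged precisely so that the weight conditions there hold automatically. First I would observe that the classification of an edge as an $r$-edge or a $t$-edge is a purely geometric feature of the Ruby graph: it is read off from the canonical model $U_c$ (left versus right scattering at each $z\in\cK$) and does not depend on the particular collection $\{S_z\}$. Consequently the admissible path of figure~(\ref{fig:projection}) --- which crosses only hexagons and rectangles in the distant past and only triangles and rectangles in the distant future, hence is an $r$-path on $(-\infty,-N]$ and a $t$-path on $[N,\infty)$ for some integer $N>1$ --- is available for every Chalker--Coddington model on $\cR$. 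I fix such a $\gamma$ together with the corresponding $N$.

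Next I would verify the weight hypothesis of Theorem~\ref{thm:main}. On the past part the bound is immediate, $|r_z|<c_2<1$ for all $z\in\cK_{\gamma\restriction(-\infty,-N]}$ by assumption. For the future part I invoke unitarity, $|r_z|^2+|t_z|^2=1$, together with the lower bound $|r_z|>c_1$, to obtain
\[
|t_z|=\sqrt{1-|r_z|^2}<\sqrt{1-c_1^2}<1\qquad\forall z.
\]
Setting $c:=\max\{c_2,\sqrt{1-c_1^2}\}\in(0,1)$, both required estimates $|r_z|\le c<1$ on the $r$-part and $|t_z|\le c<1$ on the $t$-part hold. This is the only genuine content of the argument: the essential mechanism is that a single two-sided bound on $|r_z|$ forces, through the unitarity constraint, the companion weights $|t_z|$ to stay uniformly below $1$ as well, which is exactly what the theorem demands on the $t$-segment.

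Finally, Theorem~\ref{thm:main} applies to $\gamma$ and yields that $\ind(\Phi_\gamma)$ is well defined with $|\ind(\Phi_\gamma)|=1\neq0$. Hence $P_\gamma=\chi(x\in\cR_+)$ is a projection with non-trivial index, which is the claim. I do not expect a real obstacle here --- the statement is a repackaging of Theorem~\ref{thm:main} --- so the only point deserving care is the bookkeeping above, namely checking that the theorem's uniform hypotheses are met for an arbitrary admissible family of scattering matrices obeying the stated bound, using the geometric (model-independent) availability of the path.
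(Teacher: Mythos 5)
Your argument is correct and is exactly the paper's (essentially one-line) reasoning: the corollary is stated immediately after the remark that unitarity gives $\vert r_z\vert^2+\vert t_z\vert^2=1$, so the two-sided bound on $\vert r_z\vert$ forces $\vert t_z\vert\le\sqrt{1-c_1^2}<1$, and a geometrically fixed admissible path with an $r$-past and a $t$-future then satisfies the hypotheses of Theorem~\ref{thm:main}. Your write-up just makes explicit the bookkeeping that the paper leaves implicit.
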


\begin{rem}\label{rem:cap}A remarkable situation which is covered  by the Corollary  is the translation invariant $0<\vert r_z\vert=const\neq1$ case with all phases equal to unity; here the spectrum is absolutely continuous and exhibits gaps or not.  However, for a $\gamma$ with a $r-t$ transition the flux operator $\Phi_\gamma$ is not trace class or even compact. {The analogous situation is present for perturbations of the translation invariant model, see \cite{ABJ3}}.
\end{rem}

\section{Edge current in two dimensional spin-$1/2$ Quantum Walks}

We analyze transport properties of a class of  two dimensional spin-$1/2$ Quantum Walks proposed in \cite{kitagawa}  for which  transport seems to have been experimentally observed \cite{chenEtAl}. The feature of these walks is that they depend on two $\bC^2$ coins. We use the methods which were developed in \cite{ABJ5} for coined Quantum Walks which depend on one $\bC^4$ coin and analyze the flux out of a subspace defined by a projection which can be interpreted as a quantum lead.

We present the class of models. The Hilbert space is $\bH=\ell^2(\bZ^2;\bC^2)$  its canonical scalar product $\left\langle\cdot, \cdot\right\rangle$.

For a family of $2\times2$ matrices $\{M(z)\}_{z\in\bZ^2}\subset\bM(2;\bC)$ the multiplication operator  $\bbM$ on $\bH$ with symbol $M$ is
\[\bbM\psi(z):=M(z)\psi(z), \quad \psi\in\bH.\]
A coin $\bbC$ is a multiplication operator with unitary symbol $\{C(z)\}_{z\in\bZ^2}\subset U(2)$.

With the canonical basis vectors, $e_x,e_y$ of $\bZ^2$ denote  $S_j\psi(z):=\psi(z-e_j),\quad j\in\{x,y\}$ the shift in the $x$ or $y$ direction. The corresponding conditional shifts are
\[\cshift_j:\bH\to\bH, \quad \cshift_j\psi(z):=
\left(\begin{array}{cc}S_j&   0  \\0  & S_j^\ast \end{array}\right)\psi(z)=\left(\begin{array}{c}\psi_1(z-e_j) \\\psi_2(z+e_j)\end{array}\right).\]
 We use the expressions $\cshift_j=S_j\bbP_+ + S_j^\ast \bbP_-$
with the projections $\bbP_\pm$ whose symbols are $\{P_\pm\}_{z\in\bZ^2}$ the projections on the upper resp. lower component of $\bC^2$. We also use $\bZ^2\ni z=(x, y)$ and denote the canonical basis vectors of $\bH$ by 
\[\Ket{z_0;+}:=\left(\bZ^2\ni z\mapsto \delta_{z,z_0}\left(\begin{array}{c} 1 \\ 0\end{array}\right)\in\bC^2\right),\quad \Ket{z_0;-}:=\left(\bZ^2\ni z\mapsto \delta_{z,z_0}\left(\begin{array}{c} 0 \\ 1\end{array}\right)\in\bC^2\right).\]
Note that $S_j\Ket{z;\pm}=\Ket{z+e_j;\pm}$.

The family of unitary operators which we study depends parametrically on two unitary symbols:

Let $\bbC_1,\bbC_2$ two coins. Define 
\begin{equation}\label{eq:U}
U:\bH\to\bH, \qquad U:=\cshift_y C_2 \cshift_x C_1.
\end{equation}

In order to fix notations, we parametrize the symbols for $z\in\bZ^2$ by
\[C_j(z)=q_j(z)\left(
\begin{array}{cc}
  r_j(z)   &-t_j(z)   \\
  \overline{t_j}(z)  &  \overline{r_j}(z)  
\end{array}
\right)\hbox{ with } q(z)\in S^1, r(z),t(z)\in\bC \hbox{ s.t. } \vert r(z)\vert^2+\vert t(z)\vert^2=1.\]

To illustrate the action of $U$, see figure (\ref{fig:actionofU}), we use  the kernel $U(\cdot,\cdot):\bZ^2\times\bZ^2\to\bM(2;\bC)$ such that $(U\psi)(z)=\sum_{w\in\bZ^2}U(z,w)\psi(w)$ which is
\[U(z,w)=\left\lbrace\begin{array}{ll}
(q_2r_2)(w+e_x)q_1(w)\left(\begin{array}{cc}r_1(w)&-t_1(w)\\0&0\end{array}\right)& \hbox{ if }z=w+e_x+e_y\\
(q_2\overline{r_2})(w-e_x)q_1(w)\left(\begin{array}{cc}0&0\\ \overline{t_1}(w)&\overline{r_1}(w)\end{array}\right)& \hbox{ if }z=w-e_x-e_y\\
(-q_2t_2)(w-e_x)q_1(w)\left(\begin{array}{cc}\overline{t_1}(w)&\overline{r_1}(w)\\0&0\end{array}\right)& \hbox{ if }z=w-e_x+e_y\\
(q_2\overline{t_2})(w+e_x)q_1(w)\left(\begin{array}{cc}0&0\\ {r_1}(w)&-{t_1}(w) \end{array}\right)& \hbox{ if }z=w+e_x-e_y\\
\left(\begin{array}{cc}0&0\\ 0&0 \end{array}\right)& \hbox{ elsewhere }
\end{array}\right.
\]

\begin{figure}[hbt]
\centerline {
\includegraphics[width=5cm]{./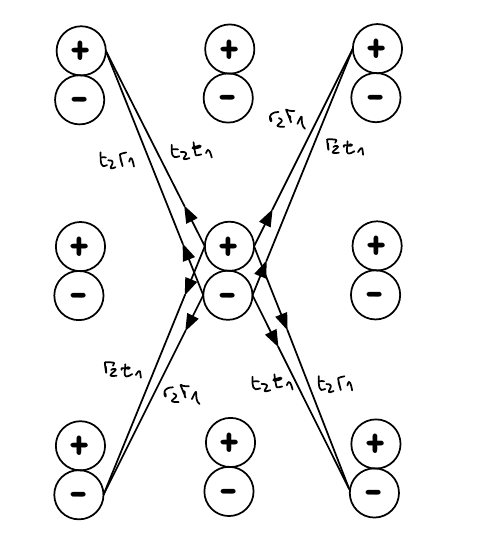}
}
\caption{The action of $U$ with the identifications $r_1\leftrightarrow\vert q_1r_1\vert$ \\ and $r_2\leftrightarrow\vert q_2r_2(\cdot\pm e_x)\vert\ldots$ }
\label{fig:actionofU}
\end{figure}

The basic observation is:

\begin{proposition}\label{propo:bo}
Let $\bbC_1. \bbC_2$ be two coins such that
\[\left.\begin{array}{ll}r_2(x,\phantom{-}0)=0=t_1(x,\phantom{-}0)& x\in\bZ\\t_2(x,-1)=0=r_1(x,-1)& x\in\bZ\end{array}\right. .
\]
Then for $U=\cshift_y\bbC_2\cshift_x\bbC_1$ the subspace
\[\bH_0:=\bigoplus_{x\in\bZ} span\left\{\Ket{x,0;+}, \Ket{x,-1;-}\right\}\]
is invariant for $U$. Furthermore  $U\restriction\bH_0$ is a two dimensional shift with wandering subspace $\bL_0:=span \left\{\Ket{-1,0;+},\Ket{-1,-1;-}\right\}$, i.e. $U^n\bL_0\perp\bL_0, \quad\forall n\neq0$ and $\bigoplus_{n\in\bZ}U^n\bL_0=\bH_0$
\end{proposition}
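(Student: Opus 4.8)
The plan is to read off the action of $U=\cshift_y\bbC_2\cshift_x\bbC_1$ directly on the two families of basis vectors spanning $\bH_0$, treating the four factors one at a time. Take first $\Ket{x,0;+}$. Applying $\bbC_1$ at the site $(x,0)$ produces $q_1(x,0)\big(r_1(x,0)\Ket{x,0;+}+\overline{t_1}(x,0)\Ket{x,0;-}\big)$, and the hypothesis $t_1(x,0)=0$ annihilates the lower component, leaving a pure $+$ state of unimodular weight. The conditional shift $\cshift_x$ advances this $+$ component to $(x{+}1,0)$, and $\bbC_2$ there produces $q_2(x{+}1,0)\big(r_2(x{+}1,0)\Ket{x{+}1,0;+}+\overline{t_2}(x{+}1,0)\Ket{x{+}1,0;-}\big)$; now $r_2(x{+}1,0)=0$ kills the upper component, so only a $-$ state survives, which $\cshift_y$ moves to $(x{+}1,-1)$. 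Thus
\[U\Ket{x,0;+}=\alpha_x\Ket{x{+}1,-1;-},\qquad \alpha_x\in S^1.\]
A parallel computation for $\Ket{x,-1;-}$, where instead $r_1(x,-1)=0$ removes the stray component after $\bbC_1$ and $t_2(x{+}1,-1)=0$ removes it after $\bbC_2$, gives
\[U\Ket{x,-1;-}=\beta_x\Ket{x{+}1,0;+},\qquad \beta_x\in S^1.\]

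These two identities give invariance at once: $U$ sends every basis vector of $\bH_0$ to a unimodular multiple of another basis vector of $\bH_0$, so $U\bH_0\subseteq\bH_0$; and since each basis vector lies visibly in the range (for instance $\Ket{x,0;+}=\overline{\beta}_{x-1}\,U\Ket{x-1,-1;-}$), in fact $U\bH_0=\bH_0$, so $\bH_0$ reduces the unitary $U$.

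For the shift structure I would observe that the two displayed relations exhibit $U$ on the orthonormal basis of $\bH_0$ as a phase times a permutation, and that this permutation splits into exactly two bi-infinite orbits: the orbit through $\Ket{0,0;+}$, namely $\{\Ket{2k,0;+}\}_{k\in\bZ}\cup\{\Ket{2k{+}1,-1;-}\}_{k\in\bZ}$, and the orbit through $\Ket{1,0;+}$, namely $\{\Ket{2k{+}1,0;+}\}_{k\in\bZ}\cup\{\Ket{2k,-1;-}\}_{k\in\bZ}$, which together exhaust the basis. The wandering subspace $\bL_0=span\{\Ket{-1,0;+},\Ket{-1,-1;-}\}$ has exactly one generator on each orbit. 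Since distinct powers of $U$ carry a basis vector to distinct basis vectors along its own orbit, the family $\{U^n\Ket{-1,0;+}\}_{n}\cup\{U^m\Ket{-1,-1;-}\}_{m}$ is orthonormal; in particular $U^n\bL_0\perp\bL_0$ for $n\neq0$. Moreover $\{U^n\Ket{-1,0;+}\}_n$ sweeps out the second orbit and $\{U^n\Ket{-1,-1;-}\}_n$ the first, so $\bigoplus_{n\in\bZ}U^n\bL_0=\bH_0$, which is exactly the asserted shift property, the two orbits being two copies of the bilateral shift on $\ell^2(\bZ)$.

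The one genuinely delicate point, and the only place a computation can slip, is the bookkeeping of the two conditional shifts interleaved with the coins: at each of the four stages one must track whether a component sits in the upper or lower spinor sector, since this decides whether $\cshift_x$ or $\cshift_y$ advances it by $+e_j$ or by $-e_j$. The four vanishing hypotheses are precisely calibrated so that the only two stages capable of creating a wrong-sector component, the action of $\bbC_1$ and of $\bbC_2$, are each forced to leave a single surviving term, ruling out any leakage out of $\bH_0$. Confirming that no other matrix entry contributes is the crux; after that the whole argument reduces to a phase count and the identification of the two orbits.
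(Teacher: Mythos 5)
Your proposal is correct and follows essentially the same route as the paper: an explicit four-stage computation showing $U\Ket{x,0;+}=\alpha_x\Ket{x+1,-1;-}$ and $U\Ket{x,-1;-}=\beta_x\Ket{x+1,0;+}$ with unimodular factors, followed by the shift/wandering-subspace bookkeeping. The only cosmetic difference is that the paper specializes to concrete phase-free coins and passes to projections to absorb the phases, whereas you carry the phases $\alpha_x,\beta_x$ explicitly and organize the conclusion via the two permutation orbits; both are sound.
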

\begin{proof}The proof is by  explicit calculation which we present for 
\[C_2(x,0)=\left(\begin{array}{cc}0&1\\1&0\end{array}\right) ,\qquad C_1(x,0)=\left(\begin{array}{cc}1&0\\0&1\end{array}\right),
\]
\[C_2(x,-1)=\left(\begin{array}{cc}1&0\\0&1\end{array}\right),\qquad C_1(x,-1)=\left(\begin{array}{cc}0&1\\1&0\end{array}\right),
\]
because for the general case all  the phases which may appear eventually cancel for the projections below. We have for $x\in\bZ$
\[\Ket{x,0;+}\xrightarrow{\bbC_1}\Ket{x,0;+}\xrightarrow{\cshift_x}\Ket{x+1,0;+}\xrightarrow{\bbC_2}\Ket{x+1,0;-}\xrightarrow{\cshift_y}\Ket{x+1,-1;-}\hbox{ and }\]
\[\Ket{x,-1;-}\xrightarrow{\bbC_1}\Ket{x,-1;+}\xrightarrow{\cshift_x}\Ket{x+1,-1;+}\xrightarrow{\bbC_2}\Ket{x+1,-1;+}\xrightarrow{\cshift_y}\Ket{x+1,0;+}\]
and thus it holds for the projections in these directions
\[U\Ket{x,0;+}\bra{x,0;+}U^\ast=\Ket{x+1,-1;-}\Bra{x+1,-1;-}\hbox{ and } \]
\[U\Ket{x,-1;-}\Bra{x,-1;-}U^\ast=\Ket{x+1,0;+}\Bra{x+1,0;+}.\]
By induction and unitarity this implies for $n\in\bZ$
\[U^n\left(\Ket{0,0;+}\bra{0,0;+}+\Ket{0,-1;-}\Bra{0,-1;-}\right){U^\ast}^n=\Ket{n,0;+}\Bra{n,0;+}+\Ket{n,-1;-}\Bra{n,-1;-} \]
which proves the three assertions.
\end{proof}

\begin{remark}Remark that under the conditions of proposition \ref{propo:bo} the absolutely continuous spectrum of $U$ is the full unit circle but other types of spectra of $U$ typically coexist because: $\sigma_{ac}(U\restriction\bH_0)=S^1\subset\sigma_{ac}(U)$ and the coins $\bbC_1,\bbC_2$ are arbitrary  on $\bH_0^\perp$.
\end{remark}

Now making use of the theory of the index of two projections and its topological invariance we can substantially extend this to a precise result on the existence of  a current which was observed in the experiment of \cite{chenEtAl}. We consider the projection on a two state quantum half line, i.e.:
\begin{definition}Let  $\bbP$ be the projection valued multiplication operator with symbol $P$ defined by 
\begin{equation}\label{eq:P}
P(x):=\left\lbrace
\begin{array}{ll}
P_+ & x\ge0 \hbox{ and } y=0\\
P_- &  x\ge0 \hbox{ and } y=-1\\
0 & \hbox{elsewhere}
\end{array}
\right. .
\end{equation}
\end{definition}
It is a corollary of Proposition \ref{propo:bo} that $\bL_0$ is the incoming subspace for $\ran P$:

\begin{remark}$U$ defined in Proposition \ref{propo:bo} is a forward shift in $\ran P$, $U^n\bL_0\perp\bL_0,\forall n\in\bN$ and $\ran P=
\bigoplus_{n\in\bN}U^n\bL_0$.
\end{remark}

This existence of a current holds true for a large class of  $U$ in the following precise sense:

\begin{thm}\label{thm:currentKwalk}
Let $\bbC_1. \bbC_2$ be two coins such that
\[\sum_{x\in\bN}\left(
\left\vert r_2(x,0)\right\vert+\left\vert t_1(x,0)\right\vert+\left\vert t_2(x,-1)\right\vert+\left\vert r_1(x,-1)\right\vert
\right)<\infty
\]
then it holds for $U$ defined in (\ref{eq:U}),  $P$ in (\ref{eq:P}) and $\Phi:=U^\ast P U-P$:
\[\Phi \hbox{ is trace class with }  \ind(\Phi)=2,
\]
\[\sigma_{ac}(U)=S^1,\]
and there exist a unitary trace class perturbation $\widehat{U}=S\oplus\widetilde{U}$ of $U$ with $S$ a bilateral shift of multiplicity $2$; further there exists a $2$ dimensional subspace 
\[\bL\subset\ker P^\perp UP^\perp\restriction_{\ran P^\perp}\] such that 
\[S^n\bL\perp\bL, \forall \bZ\ni n\neq0 \quad \hbox{and}\quad S^n\bL=PS^n\bL,\forall n\in\bN.\]
\end{thm}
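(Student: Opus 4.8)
The plan is to reduce the whole statement to the explicit two dimensional shift of Proposition \ref{propo:bo} by a trace class deformation. The substantive content is the two assertions that $\Phi$ is trace class and that $\ind(\Phi)=2$; once these are secured, the spectral conclusion $\sigma_{ac}(U)=S^1$, the splitting $\widehat U=S\oplus\widetilde U$ with $S$ a bilateral shift of multiplicity $2$, and the incoming wandering subspace $\bL$ all follow by invoking the general flux theory of Theorem \ref{thm:general} and Proposition \ref{lem:wandering} with index equal to $2$.

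First I would introduce a reference model $U_0=\cshift_y\bbC_2^0\cshift_x\bbC_1^0$ obtained from $U$ by switching off, for $x\ge0$ only, the four parameters of the hypothesis: set $t_1(x,0)=0$ and $r_1(x,-1)=0$ in $\bbC_1$ and $r_2(x,0)=0$ and $t_2(x,-1)=0$ in $\bbC_2$ for $x\in\bN$, renormalising the surviving entries to unit modulus so the modified coins stay in $U(2)$, and leaving all coins untouched for $x<0$. Rerunning the elementary orbit computation of Proposition \ref{propo:bo}, but now using only the vanishing of these parameters at sites with nonnegative abscissa, gives $U_0\Ket{x,0;+}=(\hbox{phase})\Ket{x+1,-1;-}$ and $U_0\Ket{x,-1;-}=(\hbox{phase})\Ket{x+1,0;+}$ for every $x\ge0$. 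Hence $U_0\,\ran P\subset\ran P$ with $U_0\,\ran P=\bigoplus_{x\ge1}span\{\Ket{x,0;+},\Ket{x,-1;-}\}$, so $U_0$ restricts to a pure forward shift on $\ran P$ with two dimensional wandering subspace $span\{\Ket{0,0;+},\Ket{0,-1;-}\}$. From this the index of $\Phi_0:=U_0^\ast P U_0-P$ reads off directly: forward invariance gives $\ran P\subset\ran(U_0^\ast P U_0)$, the orthogonal complement $W:=\ran(U_0^\ast P U_0)\ominus\ran P$ having dimension $2$, and distinguishing the three cases $W$, $\ran P$ and $\ker(U_0^\ast P U_0)$ shows that $\Phi_0$ equals the identity on $W$ and vanishes elsewhere. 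Thus $\Phi_0$ is the rank two projection onto $W$, has no $(-1)$-eigenspace, and $\ind(\Phi_0)=\tr(\Phi_0)=2$.

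The remaining, and main, step is to transport this to $U$. Writing the product difference $U-U_0=\cshift_y\bbC_2\cshift_x(\bbC_1-\bbC_1^0)+\cshift_y(\bbC_2-\bbC_2^0)\cshift_x\bbC_1^0$, the factors $\bbC_1-\bbC_1^0$ and $\bbC_2-\bbC_2^0$ are multiplication operators supported on the sites $(x,0),(x,-1)$, $x\in\bN$, with symbol norms dominated by the four parameters; by the summability hypothesis they are trace class, hence so are $U-U_0$ and $\Phi-\Phi_0=U^\ast P(U-U_0)+(U-U_0)^\ast P U_0$. As $\Phi_0$ is finite rank this already gives that $\Phi$ is trace class. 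Finally, scaling the four switched-off parameters by $s\in[0,1]$ produces a path $U_s$ from $U_0$ to $U$ along which $\Phi_s=U_s^\ast P U_s-P$ is a difference of projections and, granted the trace norm control just indicated, depends continuously on $s$ in trace norm; its trace is then integer valued and, by continuity of $\tr$, continuous in $s$, hence constant, giving $\ind(\Phi)=\tr(\Phi)=\tr(\Phi_0)=\ind(\Phi_0)=2$.

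The hard part is exactly this trace norm control: one has to exploit the finite range of $U$ together with the $\ell^1$ summability of the four boundary parameters to keep $\Phi_s-\Phi_0$ in the trace class, and uniformly so along the deformation, rather than merely in the compact ideal. With $\ind(\Phi)=2$ in hand, Theorem \ref{thm:general} and Proposition \ref{lem:wandering} deliver $\sigma_{ac}(U)=S^1$, the perturbation $\widehat U=S\oplus\widetilde U$ with $S$ a bilateral shift of multiplicity $2$, and the incoming subspace $\bL\subset\ker P^\perp U P^\perp\restriction_{\ran P^\perp}$ satisfying $S^n\bL\perp\bL$ for $\bZ\ni n\neq0$ and $S^n\bL=PS^n\bL$ for $n\in\bN$, which is the perturbed analogue of $\bL_0$.
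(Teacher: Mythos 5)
Your proposal is correct and follows essentially the same route as the paper: introduce the reference walk $U^0$ by switching off the four boundary parameters for $x\ge0$ (keeping the surviving entries unimodular), observe that $\Phi^0$ is the rank-two projection onto $\ran(U^{0\ast}PU^0)\ominus\ran P$ so $\ind(\Phi^0)=2$, show $U-U^0$ (hence $\Phi-\Phi^0$) is trace class from the $\ell^1$ hypothesis via the telescoped product difference, and then invoke Theorem \ref{thm:general} and the wandering-subspace construction. The only real divergence is the index-transport step, and there your ``hard part'' is a detour you do not need: since $U-U^0$ is already compact, Proposition \ref{lem:wandering}.\ref{item:compact} gives $\ind(\Phi)=\ind(\Phi^0)$ outright, with no interpolating family $U_s$ and no uniform trace-norm control along a path; this is exactly what the paper does.
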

\begin{proof}Separating the phases we write for the parameters of $\bbC_1,\bbC_2$: $r_j=e^{i\rho_j}\vert r_j\vert=e^{i\rho_j}\sqrt{1-\vert t_j\vert^2}$ and $t_j=e^{i\tau_j}\vert t_j\vert=e^{i\tau_j}\sqrt{1-\vert r_j\vert^2}$.

Define two coins $\bbC_1^0,\bbC_2^0$ by the symbols

\[C_2^0(z):=\left\lbrace
\begin{array}{ll}
q_2\left(\begin{array}{cc}0&-e^{i\tau_2}\\e^{-i\tau_2}&0\end{array}\right)(z)& \hbox{ if }z=(x,0),\forall x\in\bN\\
q_2\left(\begin{array}{cc}e^{i\rho_2}&0\\0&e^{-i\rho_2}\end{array}\right)(z)& \hbox{ if }z=(x,-1),\forall x\in\bN\\
C_2(z)& \hbox{ elsewhere }
\end{array} ,
\right.\]
\[C_1^0(z):=\left\lbrace
\begin{array}{ll}
q_1\left(\begin{array}{cc}e^{i\rho_1}&0\\0&e^{-i\rho_1}\end{array}\right)(z)& \hbox{ if }z=(x,0),\forall x\in\bN\\
q_1\left(\begin{array}{cc}0&-e^{i\tau_1}\\e^{-i\tau_1}&0\end{array}\right)(z)& \hbox{ if }z=(x,-1),\forall x\in\bN\\
C_1(z)& \hbox{ elsewhere }
\end{array} .
\right.\]

For the Hilbert-Schmidt norm $\Vert\cdot\Vert_{HS}$ on $\bM(2;\bC)$ we obtain 
\[\left\Vert (C_2-C_2^0)(z)\right\Vert_{HS}^2\le 4
\left\{\begin{array}{ll} \vert r_2(z)\vert^2& z=(x,0),\forall x\in\bN \\ \vert t_2(z)\vert^2& z=(x,-1), \forall x\in\bN\\0& \hbox{elsewhere} \end{array}\right. ,
\]
\[\left\Vert (C_1-C_1^0)(z)\right\Vert_{HS}^2\le 4
\left\{\begin{array}{ll} \vert t_1(z)\vert^2& z=(x,0),\forall x\in\bN \\ \vert r_1(z)\vert^2& z=(x,-1), \forall x\in\bN\\0& \hbox{elsewhere}\end{array}\right. .
\]
It follows for the trace norm
\begin{eqnarray*}
&&\Vert \bbC_2-\bbC_2^0\Vert_1+\Vert \bbC_1-\bbC_1^0\Vert_1 \le const. \sum_{z\in\bZ^2}\sum_j \Vert (C_j-C_j^0)(z)\Vert_{HS}\le
\\
&&\sum_{x\in\bN}\left(
\left\vert r_1(x,-1)\right\vert+\left\vert t_2(x,-1)\right\vert+\left\vert t_1(x,0)\right\vert+\left\vert r_2(x,0)\right\vert
\right)<\infty.
\end{eqnarray*}
For $U^0:=\cshift_y\bbC_2^0\cshift_x\bbC_1^0$, $\Phi^0:={U^0}^\ast P U^0-P$ it follows that the difference 
\[U-U^0=\cshift_y(\bbC_2-\bbC_2^0)\cshift_x\bbC_1+\cshift_y\bbC_2\cshift_x(\bbC_1-\bbC_1^0)\hbox{ is trace class }\]
and thus
\[\Phi-\Phi^0\hbox{ is trace class }.\]
$\Phi^0$ equals minus the orthogonal projection on  $\bL_0$. It follows that $\Phi$ is trace class and that $\ind(\Phi)=2$. 

Now Theorem \ref{thm:general}.\ref{item:trace} and the construction of $\bL$ in its proof (see \cite{ABJ5} Theorem 2.1, adapted by replacing $P\leftrightarrow P^\perp$) implies the assertion.
\end{proof}

As an illustration of this result of the existence of a  current in the subspace $\ran P$ for a trace class perturbation of $U$, we explicit the non-triviality of the incoming subspace $\bL$  for $U$ for a  finite rank perturbation of the paradigmatic situation of Proposition \ref{propo:bo}.

\begin{remark}\label{rem:illustration}
Let $\bbC_1. \bbC_2$ be two coins such that, c.f. figure (\ref{fig:illustration})
\[\left.\begin{array}{ll}r_2(x,\phantom{-}0)=0=t_1(x,\phantom{-}0)& x\in\bN\setminus\{1,2\}\\t_2(x,-1)=0=r_1(x,-1)& x\in\bN\bigcup\{-1,0\}\end{array}\right. .
\]
Then for the incoming state $\bL$ of Theorem \ref{thm:currentKwalk} it holds 
 $\bL:=span \left\{\Ket{-1,-1;-},\psi\right\}$ where $\psi$ has generically non zero components on all vectors in $\{\Ket{-1,0;+},\Ket{1,0;+},\Ket{3,0;+}\}$
\end{remark}

\begin{figure}[hbt]
\centerline {
\includegraphics[width=5cm]{./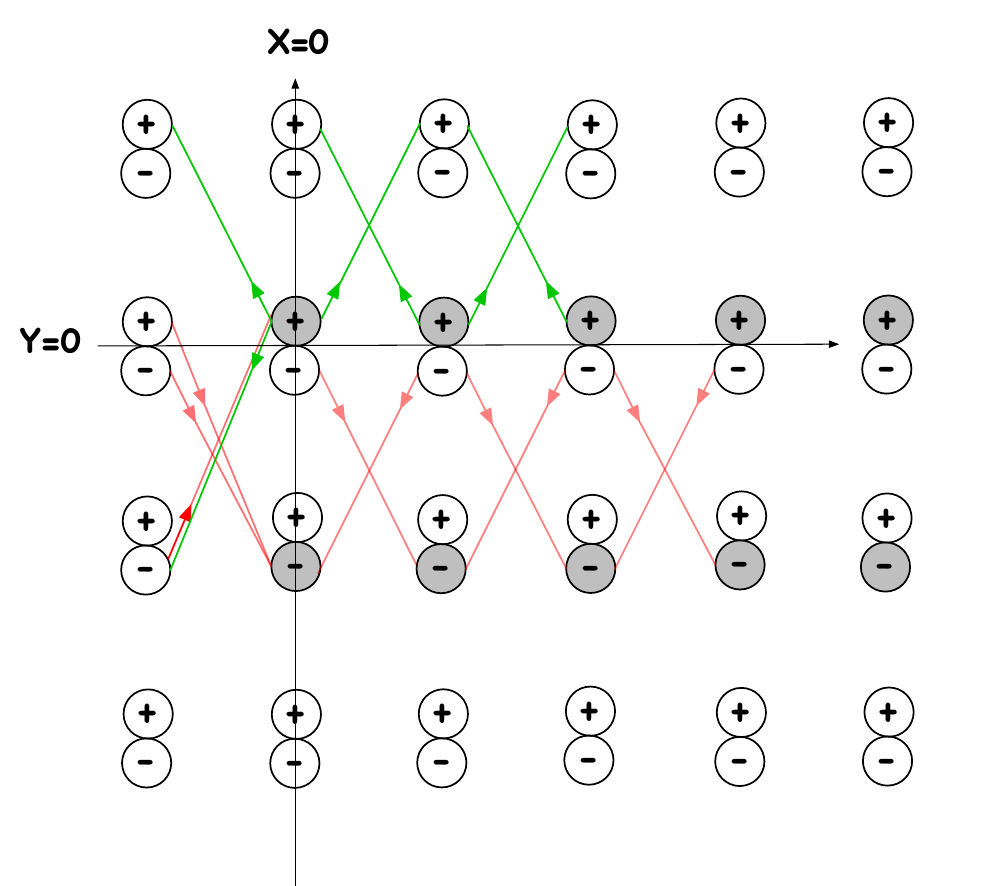}
}
\caption{Illustration of the incoming space for $U$ in Remark (\ref{rem:illustration}). The gray dots represent the states in $\ran P$, green arrows the elements of $P^\perp U P$, red arrows the incoming elements of $PUP^\perp$}
\label{fig:illustration}
\end{figure}

\appendix
\section{General results on the flux operator $\Phi$}
The important notion of the relative index of  two projections was defined in \cite{ASS}. We list here some of its known properties which are used above. For the proofs we refer to \cite{ASS} and \cite{ABJ5} as concerns the spectral and dynamical implications.
\begin{thm}\label{thm:general}
Let $U$ be a unitary operator on a Hilbert space and $P$ an orthogonal projection. For the selfadjoint operator
\[\Phi:=U^\ast P U-P=U^\ast\left[P,U\right] \] 
suppose that  $1$ is an isolated eigenvalue of finite multiplicity of  $\Phi^2$ and define the integer
\[\ind(\Phi)
:= \dim\ker\left(\Phi-\bI\right)-\dim\ker\left(\Phi+\bI\right).\]
If the index does not vanish,  $\ind(\Phi)=n\neq0$, then:

there exists a unitary $\widehat{U}$ such that   $\widehat{U}=S\oplus\widetilde{U}$ where $S$ is a bilateral shift of multiplicity $\vert n\vert$, $\widetilde{U}$ is unitary on its subspace , $\lbrack\widetilde{U},P\rbrack=0$ and :
\begin{enumerate}
\item $\Vert U-\widehat{U}-F\Vert=\cO(\Vert\Phi_<\Vert)$ for a finite rank operator $F$  and $\Phi_<$ the restriction of $\Phi$ to its spectral subspace off $\pm1$ : $\Phi_<:=\Phi\chi(\Phi^2<1)$;
\item
if  $\lbrack P,U\rbrack$ is compact then $U-\widehat{U}$ is compact   and  the essential spectrum of $U$ is the whole unit circle :
\[\sigma(U)=S^1;\]
\item\label{item:trace} if  $\lbrack P,U\rbrack$ is trace class then $U-\widehat{U}$ is trace class   and the absolutely continuous spectrum of $U$ is the whole unit circle:
\[\sigma_{ac}(U)=S^1.\]
\end{enumerate}\end{thm}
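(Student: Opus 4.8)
The plan is to work entirely inside the theory of the relative index of two orthogonal projections from \cite{ASS}. Write $Q := U^\ast P U$, so that $\Phi = Q - P$ is a self-adjoint contraction whose extreme eigenspaces are
\[
\ker(\Phi-\bI)=\ran Q\cap\ker P,\qquad \ker(\Phi+\bI)=\ran P\cap\ker Q,
\]
and $\ind(\Phi)$ measures the mismatch between them. The hypothesis that $1$ is isolated in $\sigma(\Phi^2)$ makes both eigenspaces finite dimensional, so $\ind(\Phi)=n$ is well defined, and I would first record the standard fact that this integer is invariant under perturbations of $\Phi$ staying in a fixed operator ideal. Since $\Phi=U^\ast[P,U]$ with $U$ unitary, $\Phi$ is compact (resp.\ trace class) exactly when $[P,U]$ is, matching the hypotheses of items 2 and 3.

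The core is the construction of $\widehat U$. I would split $\hil=\hil_1\oplus\hil_<$ along $\chi(\Phi^2=1)$, where $\hil_1=\ker(\Phi-\bI)\oplus\ker(\Phi+\bI)$ is finite dimensional. A short computation shows $\hil_1$ is invariant under both $P$ and $Q$, hence (both being self-adjoint) reduces them; consequently $\hil_<=\hil_1^\perp$ reduces $P$ and $Q$ and satisfies $\Vert(P-Q)\restriction_{\hil_<}\Vert=\Vert\Phi_<\Vert<1$. On $\hil_<$ the pair $(P,Q)$ is therefore in generic position, and I would invoke the Kato--Sz.-Nagy direct rotation $R$ --- the explicit function of $P,Q$ with $RPR^\ast=Q$, $\Vert R-\bI\Vert=\cO(\Vert\Phi_<\Vert)$, and $R-\bI$ in the same ideal as $\Phi$ --- to manufacture a remainder $\widetilde U$ that commutes with $P$. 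On the finite dimensional $\hil_1$ I would insert by hand the wandering subspace $\bL$ of dimension $\vert n\vert$, realized inside $\ker P^\perp U P^\perp\restriction_{\ran P^\perp}$ as in Theorem \ref{thm:currentKwalk}, and let $\widehat U$ act there as a bilateral shift $S$ of multiplicity $\vert n\vert$, oriented forward or backward according to $\mathrm{sgn}\,n$, so that the half-space projection cuts $S$ and returns $\ind=n$. Assembling $\widehat U=S\oplus\widetilde U$, the difference $U-\widehat U$ is a finite rank operator $F$ supported on $\hil_1$ plus the rotation error on $\hil_<$, which gives item 1 with $\Vert U-\widehat U-F\Vert=\cO(\Vert\Phi_<\Vert)$.

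Items 2 and 3 then follow from perturbation theory. When $[P,U]$ is compact, $R-\bI$ is compact and $F$ has finite rank, so $U-\widehat U$ is compact; the bilateral shift $S$ has full essential spectrum $S^1$, and invariance of the essential spectrum under compact perturbation (Weyl) upgrades this to $\sigma(U)=S^1$. When $[P,U]$ is trace class, $U-\widehat U$ is trace class; the bilateral shift has purely absolutely continuous spectrum equal to $S^1$, and the Birman--Krein/Kato--Rosenblum theory for unitary operators (existence and completeness of the wave operators under trace class perturbation) preserves the absolutely continuous spectrum, yielding $\sigma_{ac}(U)=S^1$.

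The delicate point is not the spectral corollaries but the construction itself: producing a single global unitary $\widehat U$ that simultaneously splits off an honest shift of the correct multiplicity $\vert n\vert$, leaves a genuinely $P$-commuting remainder $\widetilde U$, and differs from $U$ only by a finite rank term plus an error controlled in norm by $\Vert\Phi_<\Vert$ and lying in the prescribed ideal. Reconciling the rotation on $\hil_<$ with the hand-built shift on $\hil_1$ so that $\widehat U$ is exactly unitary and exactly block diagonal --- equivalently, proving that the only obstruction to $[\widehat U,P]=0$ modulo finite rank is precisely a shift whose multiplicity equals $\vert\ind(\Phi)\vert$ --- is the crux, and is the substance of the relative index theorem of \cite{ASS} combined with the wandering subspace construction of \cite{ABJ5}, Theorem 2.1.
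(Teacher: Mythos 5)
You should first be aware that the paper itself offers no proof of Theorem \ref{thm:general}: it is quoted as a known result, with the Appendix explicitly deferring all proofs to \cite{ASS} and to \cite{ABJ5} (Theorem 2.1). So the comparison is against that construction, and your proposal in fact ends by citing exactly those two sources for what you yourself call ``the crux'' --- which means the one step a proof of this theorem must actually supply is deferred rather than proven. The parts you do carry out are correct and standard: the identifications $\ker(\Phi-\bI)=\ran Q\cap\ker P$ and $\ker(\Phi+\bI)=\ran P\cap\ker Q$, the equivalence of ideal membership of $\Phi$ and $[P,U]$, Weyl's theorem for item 2, and Kato--Rosenblum/Birman--Krein for unitaries for item 3. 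These are the easy spectral corollaries once $\widehat U=S\oplus\widetilde U$ exists.

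The genuine gap is in your assembly of $\widehat U$, and it is structural, not cosmetic. You propose to ``insert by hand'' the bilateral shift $S$ on the finite-dimensional space $\hil_1=\ker(\Phi^2-\bI)$; this is impossible, since a bilateral shift of multiplicity $\vert n\vert$ acts on the infinite-dimensional space $\bM=\bigoplus_{k\in\bZ}S^k\bL$. The decomposition $\widehat U=S\oplus\widetilde U$ in the theorem is along $\bM\oplus\bM^\perp$, which is a different splitting from your spectral splitting $\hil_1\oplus\hil_<$, and the two cannot be conflated: $\hil_1\oplus\hil_<$ reduces $P$ and $Q$ (since $[\Phi^2,P]=[\Phi^2,Q]=0$) but does \emph{not} reduce $U$ --- for $\psi\in\ker(\Phi+\bI)$ one only gets $PU\psi=0$, not $U\psi\in\hil_1$ --- so neither $U$ nor any candidate $\widehat U$ is block diagonal with respect to it, and your direct rotation $R$ on $\hil_<$ cannot simply be composed with $U$ to produce a $P$-commuting $\widetilde U$, because $U$ does not preserve $\hil_<$. (Calling the pair $(P,Q)$ on $\hil_<$ ``in generic position'' is also a misnomer; $\Vert\Phi_<\Vert<1$ only excludes the extreme intersections, though that suffices for the Sz.-Nagy rotation.) The actual content of \cite{ABJ5}, Theorem 2.1, is to perturb $U$ by a finite-rank operator built from the extreme eigenspaces so that a subspace $\bL$ of dimension $\vert n\vert$ taken from $\ker(\Phi-\bI)$ (or $\ker(\Phi+\bI)$, depending on the sign of $n$) becomes two-sided wandering for $\widehat U$; the index hypothesis enters via the Fredholm property of $PUP\restriction\ran P$ (Proposition \ref{lem:wandering}), whose defect-space mismatch is exactly $\vert n\vert$ and is what makes the pairing possible, and only after $\bM$ is split off does one correct on $\bM^\perp$ to achieve $[\widetilde U,P]=0$ together with the $\cO(\Vert\Phi_<\Vert)$ bound of item 1. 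As written, your proof establishes the corollaries of the structure theorem but not the structure theorem itself.
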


\begin{proposition}{\label{lem:wandering}}With $\Phi$ as in Theorem \ref{thm:general} it holds:
\begin{enumerate}
\item \label{item:phi2}\[\Phi^2\le1; \quad\lbrack\Phi^2,P\rbrack=0;\]
\[ \ker\left(\Phi+\bI\right)=\ker\left(PUP\restriction\ran{P} \right); \quad\ker\left(\Phi-\bI \right)=\ker\left( P^\perp U P^\perp\restriction\ran{P^\perp}\right).\]
\item If $\ind(\Phi)$ is defined then  $PUP$ is Fredholm on $\ran P$ and $\ind(\Phi)$ equals minus its Fredholm index :
\[\ind(\Phi)=\dim\ker QU^\ast Q-\dim\ker QUQ.\]
\item  \[\ind(\Phi)= \dim\ker\left((\Phi^2-\bI)\restriction\ran{P^\perp}\right)-\dim\ker\left((\Phi^2-\bI)\restriction\ran{P}\right).\]

\item If $\lbrack0,1\rbrack\ni t\to U(t)$ is norm continuous and unitary and for $\Phi(t)=U^\ast(t)PU(t)$: $1\notin\sigma_{ess}(\Phi(t)^2)$ then $\bZ\ni\ind(\Phi(t))=const.$ 
\item\label{item:compact} For unitaries $U_0$, $U_1$ such that $U_1-U_0$ is a compact operator, it holds for the corresponding flux operators $\Phi_0,\Phi_1$:\label{compact}
\[\ind(\Phi_0)=\ind(\Phi_1).\]
\item  A $d$-dimensional subspace $\bL$ is called wandering  for a unitary $U$, if ${{U}}^k\bL\perp\bL\quad \forall k\in\bN$. 
For an orthogonal decomposition $\bL=\bigoplus_{j=1}^{d}\bL_j$ into 1-dimensional subspaces  and for the $U$-invariant subspace
\[\bM:=\bigoplus_{k\in\bZ} {{U}}^k\bL=\bigoplus_{j=1}^{d} \bigoplus_{k\in\bZ}{{U}}^k\bL_j\]
it holds that $S:={U}\restriction\bM$ is a bilateral shift of multiplicity  $d$   and $U\restriction\bM^\perp$ is unitary on $\bM^\perp$. In particular $\sigma_{ac}(U)=\bS^1$.
\end{enumerate}
\end{proposition}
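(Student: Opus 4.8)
The unifying remark is that $\Phi=Q-P$ is the difference of the two orthogonal projections $P$ and $Q:=U^\ast P U$, so I would read the whole proposition as an instance of the Avron--Seiler--Simon analysis of a pair of projections \cite{ASS} and deduce each item from that structure. Item 1 is then elementary: $-Q\le Q-P\le P$ together with $0\le P,Q\le\bI$ gives $\Phi^2\le\bI$, and expanding $\Phi^2=P+Q-PQ-QP$ shows termwise that $[\Phi^2,P]=0$. For the eigenspaces, pairing $(Q-P)\psi=\mp\psi$ with $\psi$ and using $0\le\langle P\psi,\psi\rangle,\langle Q\psi,\psi\rangle\le\|\psi\|^2$ forces $P\psi=\psi,\ Q\psi=0$ in the $-1$ case and $P\psi=0,\ Q\psi=\psi$ in the $+1$ case; rewriting $Q\psi=0$ as $PU\psi=0$ and $Q\psi=\psi$ as $P^\perp U\psi=0$ identifies $\ker(\Phi+\bI)=\ker(PUP\restriction\ran P)$ and $\ker(\Phi-\bI)=\ker(P^\perp U P^\perp\restriction\ran P^\perp)$.

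Item 3 is then immediate: since $[\Phi^2,P]=0$ the subspace $\ker(\Phi^2-\bI)=\ker(\Phi-\bI)\oplus\ker(\Phi+\bI)$ is reduced by $P$, its $\ran P^\perp$-part being $\ker(\Phi-\bI)$ and its $\ran P$-part being $\ker(\Phi+\bI)$, whose dimensions subtract to $\ind(\Phi)$. For item 2 I would first record, for $\psi\in\ran P$, the identity $\|PUP\psi\|^2=\|PU\psi\|^2=\langle(\bI-\Phi^2)\psi,\psi\rangle$; since $\bI$ is isolated in $\sigma(\Phi^2)$ one has $\Phi^2\le(1-\delta)\bI$ on the orthocomplement in $\ran P$ of $\ker(\Phi+\bI)$, so $PUP\restriction\ran P$ is bounded below there, has closed range, and is Fredholm with finite-dimensional kernel $\ker(\Phi+\bI)$ and cokernel $\ker(PU^\ast P\restriction\ran P)$. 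Finally the unitary $U$ maps $\ker(\Phi-\bI)=\ker P\cap\ran Q$ bijectively onto $\ran P\cap\ker(UPU^\ast)=\ker(PU^\ast P\restriction\ran P)$, so $\dim\ker(\Phi-\bI)=\dim\ker(PU^\ast P\restriction\ran P)$, and the stated formula is exactly minus the Fredholm index of $PUP\restriction\ran P$.

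For the two stability statements I would argue with spectral projections. In item 4, isolation of $\bI$ in $\sigma(\Phi(t)^2)$ (uniform on the compact parameter interval) together with norm continuity of $t\mapsto\Phi(t)^2$ makes the Riesz projection $E_t:=\chi_{\{1\}}(\Phi(t)^2)$ norm continuous, and since each $\Phi(t)^2$ commutes with $P$ by item 1 so does $E_t$; hence the integer ranks $\dim\ran(E_tP)$ and $\dim\ran(E_tP^\perp)$ are locally constant, and their difference equals $\ind(\Phi(t))$ by item 3. Item 5 follows because $\Phi_1-\Phi_0=U_1^\ast P(U_1-U_0)+(U_1-U_0)^\ast PU_0$ is compact whenever $U_1-U_0$ is, so the Fredholm index of $PUP\restriction\ran P$ from item 2 is invariant under the compact perturbation $P(U_1-U_0)P$. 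Item 6 is a pure shift computation: unitarity upgrades $U^k\bL\perp\bL$ for $k\in\bN$ to mutual orthogonality of all $\{U^k\bL\}_{k\in\bZ}$, so $\bM=\bigoplus_{k\in\bZ}U^k\bL$ reduces $U$ to a bilateral shift of multiplicity $d$ while $U\restriction\bM^\perp$ is unitary; the bilateral shift has purely absolutely continuous spectrum equal to the whole circle, giving $\sigma_{ac}(U)=\bS^1$.

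The only place needing genuine analytic input, and which I expect to be the main obstacle, is item 2: showing that $PUP\restriction\ran P$ has closed range (not merely finite-dimensional kernel) and matching its cokernel with $\ker(P^\perp UP^\perp\restriction\ran P^\perp)$. Both are controlled by the spectral gap of $\Phi^2$ at $1$, through the identity $\|PUP\psi\|^2=\langle(\bI-\Phi^2)\psi,\psi\rangle$ and the $U$-conjugation bijection above. Once item 2 is secured, items 3--5 reduce to bookkeeping around the $P$-commuting projection $E_t$, and item 6 is independent and elementary.
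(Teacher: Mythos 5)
The paper offers no proof of this proposition at all --- it is stated as a compendium of known facts, with the proofs deferred to \cite{ASS} and \cite{ABJ5} --- so your write-up is supplying an argument where the paper has none. Your reading of $\Phi=U^\ast PU-P=Q-P$ as a difference of a pair of orthogonal projections is exactly the intended Avron--Seiler--Simon route, and your treatments of items 1, 2, 3, 5 and 6 are correct: the identification of the $\pm1$ eigenspaces, the identity $\Vert PUP\psi\Vert^2=\langle(\bI-\Phi^2)\psi,\psi\rangle$ on $\ran P$ giving closed range, the $U$-conjugation matching the cokernel of $PUP\restriction\ran P$ with $\ker(\Phi-\bI)$, the compactness of $\Phi_1-\Phi_0$, and the shift computation are all sound. (Minor slip: the chain $-Q\le Q-P\le P$ is false as written; you want $-P\le Q-P\le Q$, which together with $0\le P,Q\le\bI$ gives $-\bI\le\Phi\le\bI$ and hence $\Phi^2\le\bI$.)

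The genuine gap is in item 4. The projection $E_t=\chi_{\{1\}}(\Phi(t)^2)$ onto the eigenspace at the single point $1$ is \emph{not} norm continuous in $t$: the hypothesis $1\notin\sigma_{\mathrm{ess}}(\Phi(t)^2)$ does not prevent an eigenvalue $\lambda(t)<1$ of $\Phi(t)^2$ from reaching $1$ at some $t_0$ and leaving again, at which instant $\dim\ran E_t$ jumps; so the claim that $\dim\ran(E_tP)$ and $\dim\ran(E_tP^\perp)$ are separately locally constant fails. Two standard repairs: (i) let $E_t$ be the Riesz projection onto a small disc around $1$ --- this one \emph{is} norm continuous and commutes with $P$ --- and invoke the ASS symmetry lemma that every eigenvalue $\lambda\in(0,1)$ of $\Phi^2$ has equal multiplicity in $\ran P$ and in $\ran P^\perp$, so the extra eigenvalues swept up by the disc contribute zero to the difference of ranks; or (ii) bypass spectral projections altogether and use your own item 2: $\ind(\Phi(t))$ is minus the Fredholm index of $PU(t)P\restriction\ran P$, and the Fredholm index is locally constant along a norm-continuous path of Fredholm operators. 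Route (ii) is the cheaper fix since you have already established item 2, and it also subsumes your argument for item 5.
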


{\bf\large  Acknowledgments} \vspace{.2cm}

This article is part of the requirements for the  doctoral thesis of Mohamed Mouneime.

\end{document}